\documentclass{article}

\usepackage[final]{cpal_2024}

\usepackage[utf8]{inputenc} 
\usepackage[T1]{fontenc}    
\usepackage{hyperref}       
\usepackage{url}            
\usepackage{booktabs}       
\usepackage{amsfonts}       
\usepackage{nicefrac}       
\usepackage{microtype}      
\usepackage{xcolor}         
\usepackage{amsmath}
\usepackage{amssymb}
\usepackage{xcolor}
\usepackage{mathtools}
\usepackage{amsthm}
\usepackage{dsfont}
\usepackage{enumitem}
\usepackage{algorithm}
\usepackage{subcaption}
\usepackage{subcaption}
\usepackage{algorithm}
\usepackage{algorithmic}

\usepackage{booktabs}
\usepackage{amsmath}
\usepackage{amssymb}
\usepackage{amsfonts}
\usepackage{amsthm}
\usepackage{dsfont} 
\usepackage{soul}
\usepackage[capitalize,noabbrev]{cleveref}

\theoremstyle{plain}
\newtheorem{theorem}{Theorem}[section]
\newtheorem{proposition}[theorem]{Proposition}
\newtheorem{lemma}[theorem]{Lemma}

\theoremstyle{definition}
\newtheorem{definition}[theorem]{Definition}

\theoremstyle{remark}
\newtheorem{remark}[theorem]{Remark}

\newcommand{\R}{\mathbb{R}}
\newcommand{\ind}{\mathds{1}}                  
\newcommand{\Cut}{\mathrm{Cut}}                
\newcommand{\Proj}{\operatorname{Proj}}        
\newcommand{\onevec}{\mathbf{e}}               
\newcommand{\front}{\mathcal{P}}               
\newcommand{\pset}{\mathcal{Z}^{\star}}        
\title{Parallelizable Gradient-Based Optimization For Multi-Objective MaxCut}

\author{Jingjuan Huang\textsuperscript{1},~Alvaro Velasquez\textsuperscript{2}, Jia Liu\textsuperscript{1}, Ismail R. Alkhouri\textsuperscript{3,4} \\ 
  \textsuperscript{1}ECE Department, The Ohio State University\\
  \textsuperscript{2}CS Department, University of Colorado at Boulder\\
  \textsuperscript{3}Michigan SPARC, Los Alamos National Laboratory\\
  \textsuperscript{4}MICDE, University of Michigan\\
}

\begin{document}

\maketitle

\begin{abstract}
Multi-objective combinatorial optimization arises in a wide range of problems and applications, including the canonical multi-objective MaxCut problem. Differentiable single-instance quadratic methods have recently achieved remarkable performance in single-objective combinatorial optimization. In this paper, we develop a differentiable framework for multi-objective MaxCut by combining an adjacency-based quadratic formulation with linear scalarization, thereby reducing the problem to a preference-conditioned single-objective signed-weight MaxCut problem. Theoretically, we characterize the stationary points of the resulting signed-weight formulation and show how they induce preference-conditioned fixed points on the Pareto front. Computationally, unlike conventional heuristics and branch-and-bound methods, our approach is GPU-parallelizable and can therefore benefit from substantial performance speedups. We term our algorithm Multi-objective QUadratic Combinatorial Optimization (MO-QUCO) and its parallelized variant pMO-QUCO. Empirically, across different multi-layered (and weight distributions) graphs, we show that both our CPU-only and GPU-based algorithms outperform SOTA exact and heuristic methods in terms of wall-clock runtime and objective quality. Despite operating under different computational settings, MO-QUCO also outperforms the SOTA quantum method.
\end{abstract}



\section{Introduction}
Combinatorial optimization (CO) considers the task of selecting the solution that optimizes a given objective from a combinatorial (often  exponentially large) set of discrete options. 
Classical CO problems include the maximum cut problem (MaxCut) \citep{karp2009reducibility}, the maximum independent set problem (MIS) \citep{tarjan1977finding}, and the traveling salesman problem (TSP),  \citep{dantzig1954solution}, etc.
CO finds applications in, e.g., physics and circuit layout \citep{barahona1988application}, phylogenetics \citep{snir2006using}, just to name a few.
However, most of the canonical CO problems are NP-hard \citep{karp2009reducibility}.

Moreover, in many real-world applications, CO decisions are made to optimize across different criteria, where the overall performance is evaluated by balancing multiple conflicting objectives. 
Such CO problems are formulated as multi-objective CO (MO-CO) problems \citep{ehrgott2005multicriteria}. 
For example, routing a fleet of delivery vehicles could trade total travel cost against workload balance across routes \citep{jozefowiez2008multi}. Since MO-CO includes single-objective CO as a special case, it is at least as hard. 


In this paper, we focus on MO-MaxCut, an instance of MO-CO. 
Given a graph with shared nodes and edges carries different sets of edge weights, our goal is to find the graph cut sets with the maximum weight. The bi-criteria version of MO-MaxCut was studied classically by \citet{angel2006approximation}. Recently, \citet{kotil2025quantum} approached MO-MaxCut with quantum sampling based on Quantum Approximate Optimization Algorithm (QAOA) \citep{farhi2014quantum} and released a benchmark with hardware-native instances. In contrast, we take a differentiable approach and adopt their benchmark for evaluation.

Two challenges arise in solving the MO-MaxCut problem. \textbf{First}, since each objective is typically NP-Hard and the number of cuts grows exponentially with the number of nodes, classical algorithms are either heuristics without optimality guarantees or approximation approaches only with sub-optimality gap guarantees \citep{goemans1995improved,alkhouri2025scalable}. \textbf{Second}, the multi-objective nature further complicates the problem in that the number of trade-off relations grows exponentially with the number of objectives \citep{ehrgott2005multicriteria,allmendinger2022if}.


Although MO-CO problems have been studied from multiple angles, scalable solvers have emerged only recently through learning-based approaches (e.g., routing and knapsack problems~\citep{lin2022pareto}, or treating the objectives as black boxes \citep{singh2026divide}.)
Results for MO-MaxCut are even more limited.
Existing approaches evaluated for MO-MaxCut can be broadly categorized by computational paradigm into classical and quantum methods. Among classical approaches, exact multi-objective integer programming algorithms such as the Defining Point Algorithm (DPA-a) \citep{dachert2024simple} and the Disjunctive Constraint Method (DCM) \citep{boland2017new} iteratively solve integer programs to enumerate the non-dominated set.
However, a complete enumeration can be expensive or even computationally intractable. 
Classical scalarization approaches commonly include the weighted-sum method (WSM) and the $\varepsilon$-constraint method ($\varepsilon$-CM), both of which reduce a multi-objective problem to a sequence of single-objective optimization problems \citep{ehrgott2005multicriteria}. Exact WSM is computationally convenient but generally recovers only supported Pareto-optimal solutions (cf. Definition~\ref{def:supported-outcomes} and \citep{konen2025supportedness}).
In contrast, $\varepsilon$-CM can recover unsupported solutions at the cost of solving constrained subproblems. 
Scalable solvers for MO-MaxCut remain limited in the literature, with existing approaches restricted to bi-criteria approximation algorithms \citep{angel2006approximation}, a quantum algorithm \citep{kotil2025quantum}, which uses low-depth QAOA circuits to sample diverse candidate cuts and construct an approximate Pareto front through non-domination filtering.

Motivated by advances of recent differentiable solvers for solving single-objective CO problems with remarkable scalability in multiple CO problems~\citep{schuetz2022combinatorial,alkhouri2024differentiable,alkhouri2025scalable,sun2026mutation}, in this paper, we propose \ul{m}ulti-\ul{o}bjective \ul{qu}adratic box-\ul{c}onstrained \ul{o}ptimization (MO-QUCO) for the MO-MaxCut problem. Our contributions are summarized as follows:

\begin{itemize}
\item We propose a differentiable scalarized formulation for MO-MaxCut that enables gradient-based optimization over a continuous relaxation. The formulation combines the recent adjacency formulation of MaxCut with linear scalarization for multi-objective optimization.

\item Theoretically, we characterize the fixed points of projected gradient ascent (PGA) under signed edge weights. For a fixed preference vector, we show that the binary fixed points of the adjacency-based PGA update are precisely the cuts for which no $1$-bit flip improves the scalarized objective. Building on this result, we characterize the set of preference vectors for which a given cut is a fixed point and establish the corresponding 
connection to Pareto properties.

\item We develop a batched and GPU-parallelizable algorithm that simultaneously optimizes over a large collection of preference vectors and random initializations. The resulting non-dominated cuts are accumulated in a Pareto archive, with an optional multi-objective one-flip refinement step.

\item By evaluating across a wide range of objective counts and graph sizes, we show that MO-QUCO and pMO-QUCO outperform state-of-the-art exact, heuristic, and quantum methods. We further conduct comprehensive ablation studies to quantify the contribution of each key algorithmic component.

\end{itemize}


\section{Preliminaries}

\paragraph{Notations.}
An undirected weighted graph is denoted by $G = (V, E, W)$, where $V$ is the vertex set, $E \subseteq V \times V$ is the edge set, and $W$ is the matrix of edge weights. The number of nodes (resp.~edges) is denoted by $|V| = n$ (resp.~$|E| = m$), where $|\cdot|$ denotes the cardinality of a set unless otherwise stated. 
Since $G$ is undirected, we identify each edge with an unordered pair $\{u,v\}$ with $u \neq v$. The all-ones vector of size $n$ is denoted by $\onevec_n$ and $\operatorname{tr}(\cdot)$ denotes the trace of a matrix.
The weight matrix $W \in \R^{n \times n}$ is symmetric, where $W_{uv}$ denotes the weight of edge $\{u,v\} \in E$, which may be negative, and $W_{uv} = 0$ whenever $\{u,v\} \notin E$. The corresponding degree matrix is $D_W = \operatorname{diag}(W \onevec_n)$, and the signed graph Laplacian is $L_W = D_W - W$. 
For any positive integer $K$, $[K] := \{1,\dots,K\}$. The signum function, applied element-wise, is denoted as $\operatorname{sign}(\cdot)$. We use $\ind\{\cdot\}$ to denote the indicator function, which returns $1$ (resp.~$0$) when its argument is True (resp.~False). For a box $\mathcal{B} \subset \R^n$, $\Proj_{\mathcal{B}}(\cdot)$ denotes the Euclidean projection onto $\mathcal{B}$, applied entry-wise. Finally, we use $\Delta^K := \{\lambda \in [0,1]^{K} : \onevec_K^{\top} \lambda = 1\}$ to denote the probability simplex over $K$ dimensions.

\paragraph{Problem Statement.}
A \emph{cut} of $V$ is a partition $(S, \bar{S})$ with $\bar{S} = V \setminus S$, which we encode by a binary vector $z \in \{0,1\}^{n}$ with $z_v = \ind\{v \in S\}$. An edge $\{u,v\}$ is said to be \emph{cut} when its endpoints lie on opposite sides, i.e., when $z_u \neq z_v$. The \emph{cut value} of $z$ in $G$ is
\begin{equation}
    \Cut_G(z) := \sum_{\{u,v\} \in E} W_{uv}\, \ind\{z_u \neq z_v\}.
    \label{eq:cutval}
\end{equation}

\begin{definition}[Signed-weight SO-MaxCut]
\label{def:so}
Given an undirected weighted graph $G = (V, E, W)$ whose edge weights may be positive or negative, the signed-weight single-objective (SO) MaxCut problem is $\max_{z \in \{0,1\}^{n}} \Cut_G(z).$
\end{definition}

Since $z_u, z_v \in \{0,1\}$, we have $\ind\{z_u \neq z_v\} = (z_u - z_v)^2$, which yields the quadratic representation
\begin{equation}
    \Cut_G(z)
    = \sum_{\{u,v\} \in E} W_{uv}\, (z_u - z_v)^2
    = z^{\top} L_W\, z.
    \label{eq:quadform}
\end{equation}
Equivalently, a cut $z$ corresponds to the spin vector $s = 2z - \onevec_n \in \{-1,1\}^{n}$, where each edge contributes $(s_u - s_v)^2 = 4$ if it is cut and $0$ otherwise, so the cut value can also be written as $\Cut_G(s) = \tfrac{1}{4}\, s^{\top} L_W\, s$.

In this work, we generalize the recent differentiable approach of \citet{alkhouri2025scalable}, which optimizes quadratic representations (e.g., \eqref{eq:quadform} or other objectives from \citep{sun2026mutation}) over a relaxed box, to multi-objective optimization.
In the multi-objective (MO) setting, we consider $K \geq 2$ weighted graphs $G^{(k)} = (V, E, W^{(k)})$, $k \in [K]$, sharing the same vertex set $V$ and edge set $E$, with Laplacians
\[L^{(k)} := D_{W^{(k)}} - W^{(k)}.\]
Every cut is scored by the vector-valued objective
\begin{equation}
\begin{aligned}
F(z)
&:= \bigl(f_1(z), \dots, f_K(z)\bigr) \in \mathbb{R}^{K},
\qquad \text{with} \\
f_k(z)
&:= \Cut_{G^{(k)}}(z), \qquad k=1,\dots,K.
\end{aligned}
\label{eq:vecobj}
\end{equation}
In general, no single cut maximizes all $f_k$ simultaneously. Solutions are instead evaluated in the sense of Pareto dominance, which we formally define as follows:

\begin{definition}[Pareto Dominance and Pareto Front]
\label{def:pareto}
Let $z, z' \in \{0,1\}^{n}$. We say $z$ \emph{dominates} $z'$ if $f_k(z) \geq f_k(z')$ for all $k \in [K]$ and $F(z) \neq F(z')$. A cut $z^{\star}$ is \emph{Pareto-optimal} if no $z \in \{0,1\}^{n}$ dominates it. 
The set of all Pareto-optimal cuts is denoted by $\pset$, and the \emph{Pareto front} is its image $\front := \{F(z) : z \in \pset\} \subset \R^{K}$.
\end{definition}

\begin{definition}[MO-MaxCut]
\label{def:mo}
Given weighted graphs $\{G^{(k)} = (V, E, W^{(k)})\}_{k \in [K]}$, the MO-MaxCut problem is to compute a set of Pareto-optimal cuts $\mathcal{Z}$ whose image covers the entire Pareto front of \eqref{eq:vecobj}, i.e., $\{F(z) : z \in \mathcal{Z}\} = \front$.
\end{definition}

Computing $\front$ of MO-MaxCut exactly is NP-hard even in the special case with $K=1$ (i.e., SO-MaxCut). Therefore, MO solvers in practice typically retain a finite archive $\hat{\mathcal{Z}} = \{z^{(1)}, \ldots, z^{(M)}\}$ of mutually non-dominated cuts, where $z^{(i)}\in\{0,1\}^{n}$ for every $i\in[M]$. We denote its image in the objective space by
\[
F[\hat{\mathcal{Z}}]
:=
\{F(z) : z \in \hat{\mathcal{Z}}\}
\subseteq \R^{K}.
\]
The archive approximates $\front$ through this image, and we quantify its approximation quality in the experiments.

\section{A Differentiable Approach For MO-MaxCut}


In this section, we start by describing the linear scalarization used to transform the MO-MaxCut problem into a signed-weight SO-MaxCut problem. Then, we characterize the PGA-fixed points of the resulting optimization. Preference-conditioned points are characterized, followed by a detailed procedure of our CPU-only and GPU-batched algorithms. 
Due to space limitation, all proofs are relegated to the supplementary material.



\subsection{Scalarization \& The choice of Quadratic Formulation}

Linear scalarization is a classical technique for converting a multi-objective problem to a single-objective problem \citep{zadeh1963optimality,lin2019pareto,navon2020learning}. Given a preference vector $\lambda \in \Delta^K$, we replace the $K$ objectives in \eqref{eq:vecobj} by a weighted sum
\begin{equation}
    \label{eqn: weighted sum}
  h_\lambda(z) := \sum_{k \in [K]} \lambda_k f_k(z).
\end{equation}
%
%
In our setting, the scalarized problem is a new signed-weight SO-MaxCut problem. Specifically, linear scalarization yields the \emph{aggregated graph} $G_\lambda = (V, E, W_\lambda)$, with the weight matrix and Laplacian being:
\[W_\lambda := \sum_{k} \lambda_k W^{(k)},~~~ L_\lambda := \sum_{k} \lambda_k L^{(k)} = D_{W_\lambda} - W_\lambda,\]
since $h_\lambda(z) = \Cut_{G_\lambda}(z)$ for every cut $z$. Note that $W_\lambda$ inherits negative entries from the layers, so the aggregated problem remains a signed-weight MaxCut problem.



Following the same box-constrained relaxation of \citep{alkhouri2025scalable}, we relax vector $z$ to $x \in [-1,1]^{n}$ and maximize \[g^{L}_{\lambda}(x) := \tfrac{1}{4}\, x^{\top} L_\lambda\, x,\] which is the standard Laplacian formulation of MaxCut. Recently, for unweighted graphs, the work in \citet{sun2026mutation} introduced three quadratic MaxCut objectives: the perturbed Laplacian, the adjacency formulation, and the perturbed biased formulation. Using our aggregated graph, these correspond to
\[g^{P}_{\lambda}(x) := \tfrac{1}{4}\, x^{\top} (L_\lambda + \delta I)\, x,~~~ g^{A}_{\lambda}(x) := -\tfrac{1}{4}\, x^{\top} W_\lambda\, x,\]
\[g^{B}_{\lambda}(x) := -\tfrac{\delta}{4}\, \onevec_n^{\top} x - \tfrac{1}{4}\, x^{\top} W_\lambda\, x,\]
respectively, where $\delta > 0$ is a small perturbation parameter. For every $s \in \{-1,1\}^{n}$, $s^{\top}s=n$ and $s^{\top}D_{W_\lambda}s=\operatorname{tr}(D_{W_\lambda})$. Therefore, $g^{P}_{\lambda}(s)= g^{L}_{\lambda}(s)+\frac{\delta n}{4},$ and $g^{A}_{\lambda}(s)= g^{L}_{\lambda}(s)-\frac{1}{4}\operatorname{tr}(D_{W_\lambda}).$
Thus, $g^{L}_{\lambda}$, $g^{P}_{\lambda}$, and $g^{A}_{\lambda}$ differ only by additive constants independent of $s$ and have the same maximizers on $\{-1,1\}^{n}$. In contrast, $g^{B}_{\lambda}(s)
=
g^{A}_{\lambda}(s)
-\frac{\delta}{4}\onevec_n^{\top}s,$
where the last term depends on $s$ and has magnitude at most $\delta n/4$. Hence, $g^{B}_{\lambda}$ may have different binary maximizers.

For unweighted MaxCut, \citet{sun2026mutation} analyzed the PGA fixed points and local maximizers of four box-relaxed quadratic formulations. The authors showed that a binary cut admitting an improving $1$-bit flip cannot be a fixed point of the adjacency formulation, although the adjacency relaxation may contain nonbinary interior stationary points.
However, the study covers only unweighted graphs, whereas our aggregated matrices are signed (i.e., edge weight can take negative values). In the next subsection, we extend the fixed-point analysis to the signed multi-layer setting and show that, for a fixed preference vector $\lambda$, the binary fixed points of PGA under the adjacency formulation are precisely the cuts for which {\em no} $1$-bit-flip improves the scalarized objective (Theorem~\ref{thm:adjacency}). 





\subsection{PGA-Fixed Points For Signed-Weight SO-MaxCut}

In this subsection, we first provide the characterization based on the standard Laplacian objective $g^{L}_{\lambda}$, and then extend it to $g^{A}_{\lambda}$.
Throughout this subsection, we fix a preference vector $\lambda \in \Delta^K$.
Given the scalarized objective $g^{L}_{\lambda}$, the PGA iterate with step size $\alpha > 0$ and gradient $\nabla g^{L}_{\lambda}(x) = \tfrac{1}{2} L_\lambda x$ is
\begin{equation}
    x^{t+1} = \Proj_{[-1,1]^{n}}\!\bigl(x^{t} + \alpha\, L_\lambda x^{t}\bigr),
    \label{eq:pga}
\end{equation}
where the constant $\tfrac{1}{2}$ is absorbed into $\alpha$. Next, we analyze the PGA fixed points of \eqref{eq:pga} in the signed multi-layered setting, extending the single-graph unweighted analysis of \citet{alkhouri2025scalable} where the nonnegative weights render the objective function convex.

Under signed weights, $L_\lambda$ may not be positive semidefinite and $g^{L}_{\lambda}$ may not be convex as there could be some $x \in \R^{n}$ such that the positive semi-definite (PSD) condition \[x^{\top}L_\lambda x
=
\sum_{\{u,v\}\in E}(W_\lambda)_{uv}(x_u-x_v)^2\geq0\] is not always satisfied. This is because under signed weights, negative edge weights can make this quadratic form negative in some directions, so $L_\lambda$ needs not be PSD. 




\begin{definition}[Stationary point]
\label{def:stationary}
A point $x \in [-1,1]^{n}$ is a \emph{stationary point} of $g^{L}_{\lambda}$ if $L_\lambda x = 0$ (equivalently, $\nabla g^{L}_{\lambda}(x) = 0$), i.e., if $x$ lies in the null space $N(L_\lambda) := \{y \in \R^{n} : L_\lambda y = 0\}$ of $L_\lambda$.
\end{definition}
Every stationary point has $g^{L}_{\lambda}(x) = 0$ and is left unchanged by \eqref{eq:pga}. Since $L_\lambda \onevec_n = 0$, every constant vector $c\, \onevec_n$ (with $c \in [-1,1]$) belongs to the stationary set and binarizes (by rounding) to a cut of value zero. Following \citep{alkhouri2025scalable}, we exclude these stationary points from the definition of PGA fixed points.

\begin{definition}[PGA Fixed Point]
\label{def:fixed}
A point $x \in [-1,1]^{n} \setminus N(L_\lambda)$ is a \emph{PGA fixed point} of \eqref{eq:pga} if one update leaves it unchanged, i.e., $x = \Proj_{[-1,1]^{n}}\!\bigl(x + \alpha\, L_\lambda x\bigr)$.
\end{definition}
The projection creates such PGA fixed points at the boundary of the box, where the gradient needs not vanish.

\begin{lemma}[PGA Fixed-point Condition]
\label{lem:fixedpoint}
For every $\alpha > 0$, a point $x \in [-1,1]^{n} \setminus N(L_\lambda)$ is a PGA fixed point of \eqref{eq:pga} if and only if $x_v = \operatorname{sign}\!\bigl((L_\lambda x)_v\bigr)$, $\forall v \in [n]$ with $(L_\lambda x)_v \neq 0$.
\end{lemma}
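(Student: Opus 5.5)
The plan is to exploit the fact that the Euclidean projection onto the box $[-1,1]^{n}$ acts coordinatewise as the clipping map $\Proj_{[-1,1]}(y)=\max\{-1,\min\{1,y\}\}$. The fixed-point equation $x = \Proj_{[-1,1]^{n}}(x+\alpha L_\lambda x)$ therefore decouples into $n$ scalar equations
\[
x_v = \max\bigl\{-1,\min\{1,\, x_v + \alpha g_v\}\bigr\}, \qquad v\in[n],
\]
where we abbreviate $g := L_\lambda x$. It then suffices to prove that, for each fixed $v$ and any $\alpha>0$, this scalar equation holds if and only if either $g_v = 0$ or $x_v = \operatorname{sign}(g_v)$. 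The hypothesis $x\notin N(L_\lambda)$ plays no role in the equivalence itself. It only guarantees that $g\neq 0$, so that $x$ qualifies as a PGA fixed point in the sense of Definition~\ref{def:fixed} rather than a stationary point, and so that the condition is nonvacuous.

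The scalar verification splits into three cases on the sign of $g_v$.

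\emph{Case $g_v=0$.} The update reads $x_v=\Proj_{[-1,1]}(x_v)$, which holds automatically because $x_v\in[-1,1]$. Hence no constraint is imposed on such coordinates, matching the restriction ``with $(L_\lambda x)_v\neq0$'' in the statement.

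\emph{Case $g_v>0$.} Here $x_v+\alpha g_v > x_v$, so the clipped value is $\min\{1, x_v+\alpha g_v\}$, since the lower clip is inactive given $x_v+\alpha g_v > x_v \ge -1$. If $x_v=1$, this value equals $1=x_v$. If $x_v<1$, then both $1$ and $x_v+\alpha g_v$ strictly exceed $x_v$, so their minimum does as well, and the equation fails. Thus the equation holds iff $x_v = 1 = \operatorname{sign}(g_v)$.

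\emph{Case $g_v<0$.} This is symmetric to the previous case: the equation holds iff $x_v=-1=\operatorname{sign}(g_v)$.

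Combining the three cases over all $v$ yields both directions of the lemma, valid for every $\alpha>0$. There is no real obstacle here. The only points requiring care are to state explicitly that the box projection is separable, and to justify the strict inequality in the case $x_v<1$, namely that $\min\{1, x_v+\alpha g_v\}>x_v$, which uses $\alpha>0$. Note that no convexity or PSD property of $L_\lambda$ is needed, which is why the argument carries over unchanged to signed weights.
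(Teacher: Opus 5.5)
Your proposal is correct and follows the paper's proof essentially verbatim: coordinatewise clipping, the same three-case split on the sign of $(L_\lambda x)_v$, and the same strictness argument (using $\alpha>0$) showing that the coordinate moves whenever $x_v$ is not already at the endpoint $\operatorname{sign}((L_\lambda x)_v)$. Your remark that $x\notin N(L_\lambda)$ plays no role in the equivalence itself, and only separates fixed points from stationary points, also matches the paper's closing comment.
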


\begin{remark}
\label{rem:boundary}
Lemma~\ref{lem:fixedpoint} implies that every PGA fixed point lies on the boundary of $[-1,1]^n$, i.e., at least one coordinate equals $-1$ or $1$. Instead, any point in the interior that is left unchanged by the PGA update must satisfy $L_\lambda x=0$ and therefore it is a stationary point.
\end{remark}

The following theorem characterizes the binary PGA fixed points under the Laplacian formulation and shows how signed weights affect these fixed points.

\begin{theorem}[Binary PGA Fixed Points under Signed Weights]
\label{thm:binaryfixed}
For a binary vector $s\in\{-1,1\}^{n}$ and each vertex $v\in[n]$, define
\[
C_v(s):=\sum_{u:\,s_u\neq s_v}(W_\lambda)_{uv},
\]
which is the total signed weight of the cut edges incident to $v$. If $s\notin N(L_\lambda)$, then $s$ is a PGA fixed point of \eqref{eq:pga} if and only if $C_v(s)\geq0$ for every $v\in[n]$.
\end{theorem}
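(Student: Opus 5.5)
We reduce the statement to Lemma~\ref{lem:fixedpoint} by computing $(L_\lambda s)_v$ explicitly for a spin vector $s$. Since $L_\lambda = D_{W_\lambda} - W_\lambda$, we have
\[
(L_\lambda s)_v = \sum_{u}(W_\lambda)_{uv}\, s_v - \sum_u (W_\lambda)_{uv}\, s_u = \sum_{u}(W_\lambda)_{uv}(s_v - s_u).
\]
Terms with $s_u = s_v$ vanish. For $s_u \neq s_v$ we have $s_v - s_u = 2 s_v$, because $s_u = -s_v$. This gives the key identity
\[
(L_\lambda s)_v = 2 s_v\, C_v(s), \qquad v\in[n].
\]
This is the only computation needed. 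The rest is a sign check against the fixed-point condition of Lemma~\ref{lem:fixedpoint}, which applies because $s\in[-1,1]^n\setminus N(L_\lambda)$ by hypothesis.

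For the ``if'' direction, assume $C_v(s)\ge 0$ for all $v$. Take any $v$ with $(L_\lambda s)_v\neq 0$; then necessarily $C_v(s)>0$. Since $s_v\in\{-1,1\}$, we get $\operatorname{sign}((L_\lambda s)_v)=\operatorname{sign}(2s_vC_v(s))=s_v$. Lemma~\ref{lem:fixedpoint} then says $s$ is a PGA fixed point. Coordinates with $C_v(s)=0$ impose no condition.

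For the ``only if'' direction, suppose $s$ is a fixed point and some $v$ has $C_v(s)<0$. Then $(L_\lambda s)_v = 2s_vC_v(s)\neq 0$, and its sign is $-s_v\neq s_v$. This contradicts Lemma~\ref{lem:fixedpoint}, so $C_v(s)\ge 0$ for every $v$.

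There is no real obstacle here. The only points needing care are the bookkeeping of the degree term with signed weights, where $D_{W_\lambda}$ may have negative diagonal entries but the identity above still holds, and the observation that zero-gradient coordinates are unconstrained. The latter is why the condition is $C_v(s)\ge 0$ rather than strict positivity. It is also worth noting, though not needed for the proof, that flipping $s_v$ changes the scalarized cut value by $\sum_{u:s_u=s_v}(W_\lambda)_{uv}-C_v(s)$. This links the theorem to 1-flip optimality only for the adjacency formulation treated later, not directly here.
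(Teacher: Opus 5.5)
Your proof is correct and matches the paper's argument: both derive the identity $(L_\lambda s)_v = 2s_v C_v(s)$ from $L_\lambda = D_{W_\lambda} - W_\lambda$ and then check signs against Lemma~\ref{lem:fixedpoint}. The only difference is presentational: the paper states the check as the single coordinatewise equivalence $s_v(L_\lambda s)_v = 2C_v(s) \geq 0$, while you split it into the two directions.
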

Theorem~\ref{thm:binaryfixed} makes the role of signed weights explicit. When all edge weights are nonnegative, every term in $C_v(s)$ is nonnegative. Hence, every binary cut outside $N(L_\lambda)$ is a PGA fixed point regardless of its cut value. This is the regime studied by \citet{alkhouri2025scalable}.
Under signed weights, negative crossing edges may instead make $C_v(s)<0$. The corresponding PGA update then moves $s_v$ toward the interior of the box, so $s$ is not a PGA fixed point.
We next study local improvement in the binary solution space.

\begin{definition}[$1$-Bit Flip]
\label{def:scalarized-flip}
For $s\in\{-1,1\}^n$ and $v\in[n]$, let $s^{(v)}$ be the vector obtained by replacing $s_v$ with $-s_v$ while leaving all other coordinates unchanged. The vector $s^{(v)}$ is the $1$-bit flip of $s$ at vertex $v$. Its scalarized gain is $\Delta_v(s):=\Cut_{G_\lambda}(s^{(v)})-\Cut_{G_\lambda}(s)$. The flip is improving if $\Delta_v(s)>0$.
\end{definition}

The following result characterizes the binary fixed points of PGA under the adjacency formulation through the $1$-bit-flip gain. 
\begin{theorem}[Binary PGA Fixed Points of the Adjacency Formulation]
\label{thm:adjacency}
For any $s\in\{-1,1\}^{n}$ and $v\in[n]$, the $1$-bit-flip gain satisfies $\Delta_v(s)=s_v(W_\lambda s)_v$. The vector $s$ is left unchanged by $\Proj_{[-1,1]^{n}}\!\bigl(s-\alpha W_\lambda s\bigr)$ for every $\alpha>0$ iff $\Delta_v(s)\leq0$ for all $v\in[n]$, i.e., $s$ is a binary fixed point of PGA under the adjacency formulation iff its cut admits no improving $1$-bit flip.
\end{theorem}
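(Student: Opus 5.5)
The plan is to prove the two claims of Theorem~\ref{thm:adjacency} in order: first the gain identity, then the fixed-point characterization, which follows from the identity by a coordinatewise analysis of the projection.

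\emph{Gain identity.} I will use the spin form $\Cut_{G_\lambda}(s)=\tfrac14 s^\top L_\lambda s$ together with the identity $g^A_\lambda(s)=g^L_\lambda(s)-\tfrac14\operatorname{tr}(D_{W_\lambda})$ from the previous subsection. This gives $\Cut_{G_\lambda}(s)=-\tfrac14 s^\top W_\lambda s+c$, where $c$ does not depend on $s$. Flipping coordinate $v$ changes $s$ to $s^{(v)}=s-2s_v e_v$. Expanding the quadratic form gives
\[
s^{(v)\top}W_\lambda s^{(v)}=s^\top W_\lambda s-4s_v(W_\lambda s)_v+4(W_\lambda)_{vv}.
\]
Since the graph has no self-loops, $(W_\lambda)_{vv}=0$. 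Therefore $\Delta_v(s)=-\tfrac14\bigl(-4s_v(W_\lambda s)_v\bigr)=s_v(W_\lambda s)_v$. As a cross-check, I can also compute directly: flipping $v$ toggles the cut status of every edge incident to $v$, so $\Delta_v(s)=\sum_{u}(W_\lambda)_{uv}s_us_v$. This sum equals uncut weight minus cut weight at $v$, and it matches the formula.

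\emph{Fixed-point characterization.} The projection acts entrywise, so $s$ is unchanged iff each coordinate is unchanged. Write $y=s-\alpha W_\lambda s$, and fix $v$.
\begin{itemize}
\item If $s_v=1$, then $\Proj_{[-1,1]}(y_v)=1$ iff $y_v\ge 1$, i.e. iff $-\alpha(W_\lambda s)_v\ge 0$. This is equivalent to $s_v(W_\lambda s)_v\le 0$.
\item If $s_v=-1$, the projected value equals $-1$ iff $y_v\le -1$, i.e. iff $(W_\lambda s)_v\ge0$. This is again equivalent to $s_v(W_\lambda s)_v\le0$.
\end{itemize}
Since $\alpha>0$, the condition does not depend on $\alpha$. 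Hence $s$ is unchanged for every $\alpha>0$ iff it is unchanged for some $\alpha>0$, iff $\Delta_v(s)\le0$ for all $v$. Combining this with the gain identity, this is exactly the statement that no $1$-bit flip is improving.

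\emph{Main obstacle.} The only delicate step is the sign and constant bookkeeping in the gain identity. This means making sure the diagonal term vanishes ($W_\lambda$ has zero diagonal) and that the $\tfrac14$ normalization and the minus sign of the adjacency formulation combine correctly. I will settle this with the direct edge-counting cross-check above.

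A remark on the definition: Definition~\ref{def:fixed} excludes points of $N(L_\lambda)$. The theorem is phrased as ``left unchanged by the update,'' so no null-space exclusion is needed here. I will note that for the constant spin vectors, $\Delta_v\le0$ holds only when incident weights are nonpositive, and the characterization is still literally true.
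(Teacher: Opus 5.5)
Your proposal is correct and essentially matches the paper's proof. The paper derives $\Delta_v(s)=s_v(W_\lambda s)_v$ by the same edge-counting argument you give as a cross-check. Your primary expansion of $s^{(v)}=s-2s_v e_v$ in the quadratic form is an equally valid algebraic route, and it correctly uses $(W_\lambda)_{vv}=0$. The paper then carries out exactly your coordinatewise case analysis on $s_v=\pm1$.

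One small imprecision is in your side remark on constant spin vectors. There $\Delta_v(s)=\sum_u (W_\lambda)_{uv}$, so the condition is that the total incident weight at each vertex is nonpositive, not that each incident weight is.
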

Theorem~\ref{thm:adjacency} indicates that under arbitrary signed edge weights, the binary fixed points of PGA under the adjacency formulation are exactly the points whose corresponding cuts cannot be improved by any $1$-bit flip. This result extends the fixed-point analysis of \citet{sun2026mutation} {\em from unweighted graphs to the signed-weight setting.} 
Furthermore, it motivates the use of the adjacency formulation instead of the Laplacian formulation in our algorithm.




\subsection{Preference-conditioned Fixed Points}

In the previous subsection, we fixed a preference vector and characterized the binary PGA fixed points. In this subsection, we reverse the viewpoint by fixing a cut $s\in\{-1,1\}^{n}$ and identify which preferences make the cut $s$ a binary fixed point of PGA.
%
Toward this end, we define the strictly positive preference simplex as
\[
\Delta_{+}^{K}
:=
\bigl\{\lambda\in\Delta^K : \lambda_k>0 \text{ for all } k\in[K]\bigr\}.
\]
\begin{definition}[Supported Cuts and Outcomes]
\label{def:supported-outcomes}
A cut $z\in\{0,1\}^{n}$ is \emph{supported} if it globally maximizes the scalarized objective $h_\lambda$ in \eqref{eqn: weighted sum} for some $\lambda\in\Delta_{+}^{K}$ . 
The corresponding vector-valued objective $F(z)$ is called a \emph{supported outcome}.
\end{definition}

\begin{lemma}[All Supported Cuts Being Pareto-optimal]
\label{lem:supported-pareto}
Every supported cut $z\in\{0,1\}^{n}$ is Pareto-optimal.
\end{lemma}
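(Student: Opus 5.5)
The argument is by contradiction and uses only the definitions of supportedness (Definition~\ref{def:supported-outcomes}) and Pareto dominance (Definition~\ref{def:pareto}), together with the strict positivity of the preference vector. Let $z$ be a supported cut. Then there is some $\lambda\in\Delta_{+}^{K}$ such that $h_\lambda(z)\geq h_\lambda(z')$ for every $z'\in\{0,1\}^n$. Suppose, toward a contradiction, that $z$ is not Pareto-optimal. Then some $z'\in\{0,1\}^n$ dominates $z$, which means $f_k(z')\geq f_k(z)$ for all $k\in[K]$ and $F(z')\neq F(z)$.

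The key step is to turn this componentwise dominance into a strict inequality for the scalarized objective. Because $F(z')\neq F(z)$, there is an index $j$ with $f_j(z')\neq f_j(z)$. Together with the componentwise inequality, this forces $f_j(z')>f_j(z)$. We then write
\[
h_\lambda(z')-h_\lambda(z)=\sum_{k\in[K]}\lambda_k\bigl(f_k(z')-f_k(z)\bigr).
\]
Every term in this sum is nonnegative, since $\lambda_k>0$ and $f_k(z')-f_k(z)\geq 0$. The $j$-th term is strictly positive, since $\lambda_j>0$ and $f_j(z')-f_j(z)>0$. Hence $h_\lambda(z')>h_\lambda(z)$, which contradicts the assumption that $z$ globally maximizes $h_\lambda$. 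Therefore $z$ is Pareto-optimal.

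There is no real obstacle in this argument. The only point that needs care is where strict positivity of $\lambda$ enters. If $\lambda$ were allowed to lie on the boundary of $\Delta^K$, the index $j$ could have $\lambda_j=0$, and the maximizer could then be only weakly Pareto-optimal. This is exactly why Definition~\ref{def:supported-outcomes} uses $\Delta_{+}^{K}$. It is also worth noting that the signed weights play no role here: the argument uses only the linearity of $h_\lambda$ in $F$, so it does not matter whether any $W^{(k)}$ has negative entries.
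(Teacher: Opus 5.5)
Your proof is correct and takes essentially the same route as the paper's: assume some cut dominates $z$, use strict positivity of $\lambda$ to get $h_\lambda(z')>h_\lambda(z)$, and contradict the global optimality of $z$. The one difference is that you explicitly derive the strictly improved index $j$ from $F(z')\neq F(z)$, a step the paper simply asserts.
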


We then call a Pareto-optimal cut that is not supported \emph{unsupported} and its objective vector an \emph{unsupported outcome}.

Exact WSM over $\Delta_{+}^{K}$ returns only supported outcomes because it globally maximizes $h_\lambda$. Therefore, it may not recover every Pareto-optimal cut in $\pset$ when $\front$ contains unsupported outcomes \citep{zhang2007moea}. For example, consider three Pareto-optimal outcomes $F(z^{(1)})=(3,0)$, $F(z^{(2)})=(1,1)$, and $F(z^{(3)})=(0,3)$. For any $\lambda=(\lambda_1,\lambda_2)\in\Delta_{+}^{2}$, we have $h_\lambda(z^{(2)})=1$. However, $\max\{h_\lambda(z^{(1)}),h_\lambda(z^{(3)})\}\geq 3/2$. Thus, $z^{(2)}$ is unsupported and cannot be returned by exact WSM.

Unlike exact WSM, our method does not solve each scalarized problem to global optimality. We next formalize how this approximate optimization can generate candidates beyond supported cuts.

We first characterize the preferences that make $s$ a PGA binary fixed point. Let $z=(s+\onevec_n)/2$ denote its corresponding $\{0,1\}^n$ cut encoding. We write $F(s):=F(z)$ and $h_\lambda(s):=h_\lambda(z)$. Recall from Definition~\ref{def:scalarized-flip} that $s^{(v)}$ denotes the spin vector obtained by flipping coordinate $v$. We collect the changes caused by this flip across the $K$ objectives in the \emph{flip-gain vector} defined as:
\begin{align}
    \gamma_v(s) &:= \bigl(\Delta^{(1)}_v(s), \dots, \Delta^{(K)}_v(s)\bigr) \in \R^{K}.
    \label{eq:flipgain}
\end{align}
Here, $\Delta^{(k)}_v(s):=s_v\,\bigl(W^{(k)}s\bigr)_v$ is the change in $\Cut_{G^{(k)}}$ caused by flipping vertex $v$. By linearity in the edge weights, the scalarized gain of the same flip satisfies:
\begin{equation}
h_\lambda\bigl(s^{(v)}\bigr)-h_\lambda(s)
=
\sum_{k=1}^{K}\lambda_k\Delta_v^{(k)}(s)
=
\lambda^\top\gamma_v(s).
\label{eq:scalarized-flip-gain}
\end{equation}
By Theorem~\ref{thm:adjacency}, the inequalities $\lambda^\top\gamma_v(s)\leq0$ for every $v\in[n]$ are exactly the conditions under which $s$ is a binary fixed point of the PGA update under the adjacency formulation. This motivates the following definition.
\begin{definition}[Adjacency Formulation PGA Fixed Point Preference Polytope]
\label{def:prefpolytope}
For a cut $s \in \{-1,1\}^{n}$, define its adjacency formulation fixed-point preference polytope as
\begin{equation}
    \Lambda^{A}(s) := \bigl\{\lambda \in \Delta^K : \lambda^{\top} \gamma_v(s) \leq 0 \ \text{for all } v \in [n]\bigr\}.
    \label{eq:prefpolytope}
\end{equation}
\end{definition}
As the intersection of the simplex with $n$ half-spaces, $\Lambda^{A}(s)$ is a convex polytope, which may be empty or lower-dimensional.
With $\Lambda^{A}(s)$, we are now ready to state the following key result:

\begin{theorem}[Preference-conditioned PGA Fixed Points]
\label{thm:prefpolytope}
Let $s \in \{-1,1\}^{n}$, $\lambda \in \Delta^K$, and $\alpha>0$. Then, the following statements are equivalent:
\[
\begin{aligned}
\textnormal{(i)}\quad
&s=\Proj_{[-1,1]^n}\!\left(s-\alpha W_\lambda s\right),\\
\textnormal{(ii)}\quad
&\lambda^\top\gamma_v(s)\leq0\ \ \forall v\in[n],
\qquad
\textnormal{(iii)}\quad
\lambda\in\Lambda^{A}(s).
\end{aligned}
\]
Consequently, $\Lambda^{A}(s)$ is precisely the set of preferences under which $s$ is a PGA fixed point of $g^{A}_{\lambda}$.
\end{theorem}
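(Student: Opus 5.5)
The argument is a coordinatewise analysis of the projection, combined with the gain identity from Theorem~\ref{thm:adjacency} and the flip-gain decomposition \eqref{eq:scalarized-flip-gain}. The equivalence (ii)$\Leftrightarrow$(iii) holds by Definition~\ref{def:prefpolytope}, since $\Lambda^{A}(s)$ is exactly the set of $\lambda\in\Delta^K$ satisfying the $n$ inequalities in (ii). The substantive part is (i)$\Leftrightarrow$(ii). Here I will argue directly for a fixed $\alpha>0$, rather than rely on the ``for every $\alpha>0$'' phrasing of Theorem~\ref{thm:adjacency}.

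\textbf{Coordinatewise projection.} Since $\Proj_{[-1,1]^n}$ acts entrywise as clipping to $[-1,1]$, (i) holds iff for each $v$ we have $\operatorname{clip}(s_v-\alpha(W_\lambda s)_v)=s_v$.

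\emph{Case $s_v=1$.} Clipping returns $1$ iff $1-\alpha(W_\lambda s)_v\ge 1$, i.e.\ iff $(W_\lambda s)_v\le 0$.

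\emph{Case $s_v=-1$.} Clipping returns $-1$ iff $-1-\alpha(W_\lambda s)_v\le -1$, i.e.\ iff $(W_\lambda s)_v\ge 0$.

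Because $\alpha>0$, both cases combine into the single condition $s_v(W_\lambda s)_v\le 0$, with no dependence on $\alpha$.

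\textbf{Translating to preferences.} By linearity, $(W_\lambda s)_v=\sum_k\lambda_k(W^{(k)}s)_v$. Hence
\[
s_v(W_\lambda s)_v=\sum_k\lambda_k\,s_v(W^{(k)}s)_v=\sum_k\lambda_k\Delta^{(k)}_v(s)=\lambda^\top\gamma_v(s),
\]
which is \eqref{eq:scalarized-flip-gain}. Its identification with the scalarized flip gain $\Delta_v(s)$ of $G_\lambda$ is the gain identity of Theorem~\ref{thm:adjacency}, applied to each layer $W^{(k)}$ as a signed-weight graph.

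If I want to make that identity self-contained, I would verify it directly. Flipping $s_v$ toggles the cut status of exactly the edges $\{u,v\}$. The change in cut value is therefore $\sum_u W_{uv}\bigl(\ind\{s_u=s_v\}-\ind\{s_u\ne s_v\}\bigr)=\sum_u W_{uv}s_us_v=s_v(Ws)_v$, using $W_{vv}=0$.

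\textbf{Conclusion.} The coordinatewise condition is equivalent to $\lambda^\top\gamma_v(s)\le0$ for all $v$, which proves (i)$\Leftrightarrow$(ii). The final sentence follows because (i) is precisely the PGA fixed-point condition for $g^{A}_{\lambda}$: its gradient is $-\tfrac12 W_\lambda x$, with the constant absorbed into $\alpha$.

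\textbf{Main obstacle.} There is no deep step. The care needed is in handling the boundary clipping with sign conventions for both values of $s_v$, and in noting that the degenerate case $(W_\lambda s)_v=0$ is consistent with the non-strict inequality.
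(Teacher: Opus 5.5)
Your proposal is correct and matches the paper's proof: reduce (i) to $s_v(W_\lambda s)_v\leq 0$ for all $v$, expand $W_\lambda=\sum_k\lambda_kW^{(k)}$ by linearity to obtain $\lambda^\top\gamma_v(s)$, and read off (iii) from the definition of $\Lambda^{A}(s)$. The only difference is that you redo the coordinatewise clipping argument for the given $\alpha$ instead of citing Theorem~\ref{thm:adjacency}. This avoids the quantifier mismatch, since that theorem is phrased ``for every $\alpha>0$'' while this statement fixes one $\alpha$.
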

Theorem~\ref{thm:prefpolytope} provides three equivalent conditions that characterize the relationship between a binary fixed point and a preference vector.
Next, we characterize the relationship among three classes of cuts. We define their corresponding sets as follows:
\begin{align*}
\mathcal{Z}_{\mathrm{sup}}
&:= \bigl\{s \in \{-1,1\}^{n} : \exists \lambda \in \Delta_{+}^{K}
    \text{ with}\\[-1mm]
&\qquad
h_\lambda(s)
    = \max_{t \in \{-1,1\}^{n}} h_\lambda(t)\bigr\},\\
\mathcal{Z}_{\mathrm{fix}}
&:= \bigl\{s \in \{-1,1\}^{n} :~ \Lambda^{A}(s)
    \cap \Delta_{+}^{K}
    \neq \emptyset\bigr\},\\
\mathcal{Z}_{\mathrm{loc}}
&:= \bigl\{s \in \{-1,1\}^{n} :
    \nexists v \in [n] \text{ with}\\[-1mm]
&\qquad F(s^{(v)}) \geq F(s)
    \text{ and } F(s^{(v)}) \neq F(s)\bigr\},
\end{align*}
where $\mathcal{Z}_{\mathrm{sup}}$ is the set of supported cuts, $\mathcal{Z}_{\mathrm{fix}}$ is the set of cuts that are PGA fixed points of the adjacency formulation under some strictly positive preference vector, and $\mathcal{Z}_{\mathrm{loc}}$ is the set that consists of cuts $s$ for which no $1$-bit-flip neighbor $s^{(v)}$ Pareto dominates $s$.

\begin{proposition}[From Global Support to Local Pareto Stability]
\label{prop:hierarchy}
$\mathcal{Z}_{\mathrm{sup}} \subseteq \mathcal{Z}_{\mathrm{fix}} \subseteq \mathcal{Z}_{\mathrm{loc}}$.
\end{proposition}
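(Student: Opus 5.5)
We prove the two inclusions separately. Both reduce to the scalarized flip-gain identity \eqref{eq:scalarized-flip-gain} together with the characterization of $\Lambda^{A}(s)$ in Theorem~\ref{thm:prefpolytope}.

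\emph{First inclusion, $\mathcal{Z}_{\mathrm{sup}}\subseteq\mathcal{Z}_{\mathrm{fix}}$.} Let $s\in\mathcal{Z}_{\mathrm{sup}}$, and fix a witness $\lambda\in\Delta_{+}^{K}$ with $h_\lambda(s)=\max_{t}h_\lambda(t)$. Every $1$-bit flip $s^{(v)}$ is itself a spin vector, so global optimality gives $h_\lambda(s^{(v)})\le h_\lambda(s)$ for all $v\in[n]$. By \eqref{eq:scalarized-flip-gain}, this says $\lambda^\top\gamma_v(s)\le0$ for every $v$. By Definition~\ref{def:prefpolytope} (equivalently, (ii)$\Leftrightarrow$(iii) in Theorem~\ref{thm:prefpolytope}), it follows that $\lambda\in\Lambda^{A}(s)$. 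Since also $\lambda\in\Delta_{+}^{K}$, the intersection $\Lambda^{A}(s)\cap\Delta_{+}^{K}$ is nonempty, so $s\in\mathcal{Z}_{\mathrm{fix}}$.

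\emph{Second inclusion, $\mathcal{Z}_{\mathrm{fix}}\subseteq\mathcal{Z}_{\mathrm{loc}}$.} We argue by contradiction. Let $s\in\mathcal{Z}_{\mathrm{fix}}$ with witness $\lambda\in\Lambda^{A}(s)\cap\Delta_{+}^{K}$, and suppose some $v$ has $F(s^{(v)})\ge F(s)$ with $F(s^{(v)})\neq F(s)$. Coordinatewise, $\gamma_v(s)=F(s^{(v)})-F(s)$, because $\Delta_v^{(k)}(s)$ is exactly the change in $f_k$ caused by the flip. The domination assumption therefore gives $\gamma_v(s)\ge0$ entrywise, with at least one strictly positive entry $k_0$. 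Since every $\lambda_k>0$, we get $\lambda^\top\gamma_v(s)\ge\lambda_{k_0}\Delta_v^{(k_0)}(s)>0$. This contradicts $\lambda\in\Lambda^{A}(s)$, which requires $\lambda^\top\gamma_v(s)\le0$. Hence no $1$-bit flip Pareto-dominates $s$, and $s\in\mathcal{Z}_{\mathrm{loc}}$.

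The only step that is more than bookkeeping is identifying $\gamma_v(s)$ with $F(s^{(v)})-F(s)$. This needs $\Delta_v^{(k)}(s)=s_v(W^{(k)}s)_v$ to be the true change in $\Cut_{G^{(k)}}$, which is the single-graph case of the gain formula in Theorem~\ref{thm:adjacency}: apply that theorem with $W_\lambda$ replaced by $W^{(k)}$, i.e. a vertex preference $\lambda=e_k$. The sign convention of $s_v(W s)_v$ should be double-checked against the spin encoding, but we take it as given from Theorem~\ref{thm:adjacency}.

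We also note where strict positivity of $\lambda$ is used. It is what makes the second inclusion work: with a zero weight $\lambda_{k_0}=0$, a flip improving only objective $k_0$ could go undetected. This is why $\mathcal{Z}_{\mathrm{fix}}$ is defined through $\Delta_{+}^{K}$.
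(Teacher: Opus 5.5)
Your proof is correct and follows the paper's argument essentially line for line. The first inclusion compares the global maximizer with its $1$-bit neighbors to get $\lambda^\top\gamma_v(s)\le 0$ for the witness $\lambda\in\Delta_{+}^{K}$, and the second derives a contradiction from a dominating neighbor using $\lambda_k>0$ for all $k$; the sign convention you flagged checks out, since $\Delta^{(k)}_v(s)=s_v(W^{(k)}s)_v$ is exactly the change in $f_k$ (the diagonal of $W^{(k)}$ is zero), which the paper also uses implicitly.
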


Proposition~\ref{prop:hierarchy} justifies why our method can explore a broader candidate set than exact WSM. Exact WSM can return only supported cuts in $\mathcal{Z}_{\mathrm{sup}}$. If PGA reaches a global maximizer of $h_\lambda$, it also returns a supported cut. 
If it converges to a non-global binary fixed point, the resulting cut may instead lie in $\mathcal{Z}_{\mathrm{fix}}\setminus\mathcal{Z}_{\mathrm{sup}}$. Since PGA runs for only finitely many iterations before binarization, it may generate candidates outside $\mathcal{Z}_{\mathrm{fix}}$. The subsequent $1$-bit-flip refinement explores neighbors and may further discover cuts in $\mathcal{Z}_{\mathrm{loc}}\setminus\mathcal{Z}_{\mathrm{fix}}$.

\paragraph{Discussion: Connection to the Preference Budget.}
The size of $\Lambda^{A}(s)$ connects the fixed-point analysis to the preference budget $J$. 
We sample each preference uniformly from $\Delta^K$, i.e., $\lambda$ follows $\operatorname{Dirichlet}(1,\ldots,1)$, whose density is constant over the simplex. For a fixed cut $s$, we define its \emph{stability region probability} as
\[
p_s
:=
\Pr_{\lambda \sim \operatorname{Dirichlet}(1,\ldots,1)}
\bigl[\lambda \in \Lambda^{A}(s)\bigr].
\]
Thus, 
$p_s$ measures the fraction of preference vectors in $\Delta^K$ for which $s$ is a binary fixed point of the adjacency formulation.
Our method draws $J$ preferences independently. Hence, the probability that at least one preference lies in $\Lambda^{A}(s)$ is $1-(1-p_s)^J$. Thus, increasing $J$ improves preference-space coverage and creates more opportunities for PGA to reach different fixed points.


\subsection{The Proposed Batched Algorithm}
In this subsection, we introduce our \ul{m}ulti-\ul{o}bjective \ul{qu}adratic box-\ul{c}onstrained \ul{o}ptimization (MO-QUCO) algorithm and describe its main components in detail.

\begin{definition}[Non-domination operator]
\label{def:nondom}
For any candidate set $\mathcal{C}$ of binary vectors in $\{0,1\}^{n}$, $\textsc{NonDom}(\mathcal{C})$ returns the vectors in $\mathcal{C}$ that are not Pareto dominated by any other vector in $\mathcal{C}$.
\end{definition}

\begin{definition}[\textsc{BitFlip} function]
\label{def:bitflip}
For any candidate set $\hat{\mathcal{Z}}$ of binary vectors in $\{0,1\}^{n}$, the function $\textsc{BitFlip}(\hat{\mathcal{Z}})$ returns all $1$-bit-flip neighbors of these vectors:
\[
\textsc{BitFlip}(\hat{\mathcal{Z}})
:=
\{z^{(v)}:z\in\hat{\mathcal{Z}},\ v\in[n]\},
\]
where $z^{(v)}$ is obtained by replacing $z_v$ with $1-z_v$.
\end{definition}

Algorithm~\ref{alg:main} presents the full procedure of MO-QUCO. 
In Step~\ref{alg:main:pref-loop}, MO-QUCO iterates over the $J$ preference vectors, where for each preference $\lambda_j$, we first aggregate the layer weight matrices into $W_{\lambda_j}$ by linear scalarization (Step~\ref{alg:main:aggregate}), and draw a batch of $B$ initial points uniformly from the box (Step~\ref{alg:main:sample}).
\begin{algorithm}[t]
\caption{The MO-QUCO algorithm.}
\label{alg:main}
\begin{algorithmic}[1]
\REQUIRE Graphs $\{G^{(k)} = (V, E, W^{(k)})\}_{k \in [K]}$, batch size $B$, PGA steps $T$, step size $\alpha$, number of preferences $J$, maximum number of bit-flip passes $P$
\STATE \textbf{Initialization:} candidate set $\hat{\mathcal{Z}} \gets \emptyset$, sample $\{\lambda_j\}_{j=1}^{J}$ i.i.d.\ uniformly from $\Delta^K$
\FOR{$j = 1$ to $J$}\label{alg:main:pref-loop}
    \STATE $W_{\lambda_j} \gets \sum_{k \in [K]} \lambda_{j,k}\, W^{(k)}$\label{alg:main:aggregate}
    \STATE sample $X = [x_1^\top;\ldots;x_B^\top] \sim \mathcal{U}[-1,1]^{B \times n}$\label{alg:main:sample}
    \FOR{$t = 1$ to $T$}\label{alg:main:pga-start}
        \STATE $X \gets \Proj_{[-1,1]}\!\bigl(X - \alpha\, X W_{\lambda_j}\bigr)$ \COMMENT{PGA on $g^{A}_{\lambda_j}$}
    \ENDFOR\label{alg:main:pga-end}
    \STATE $\mathcal{Z}_j \gets \{\ind\{x_b > 0\} : b \in [B]\}$\label{alg:main:binarize}
    \STATE $\hat{\mathcal{Z}} \gets \hat{\mathcal{Z}} \cup \mathcal{Z}_j$\label{alg:main:collect}
\ENDFOR
\STATE $\hat{\mathcal{Z}} \gets \textsc{NonDom}\bigl(\hat{\mathcal{Z}}\bigr)$\label{alg:main:nondom}
\FOR{$p = 1$ to $P$}\label{alg:main:bitflip-start}
    \STATE $\hat{\mathcal{Z}} \gets \textsc{NonDom}\!\left(\hat{\mathcal{Z}} \cup \textsc{BitFlip}(\hat{\mathcal{Z}})\right)$
    \STATE \textbf{break if} no new cut is found
\ENDFOR\label{alg:main:bitflip-end}
\STATE \textbf{return} $\hat{\mathcal{Z}}$
\end{algorithmic}
\end{algorithm}

\begin{table*}[t]
\centering
{\small
\setlength{\tabcolsep}{5pt}
\begin{tabular}{l  rrr  rrr}
\toprule
 & \multicolumn{3}{c}{$K = 3$ \quad ($\mathrm{HV}_{\max} = 43{,}471.704$)} & \multicolumn{3}{c}{$K = 4$ \quad ($\mathrm{HV}_{\max} = 1{,}266{,}143.326$)} \\
\cmidrule(lr){2-4} \cmidrule(lr){5-7}
Method & \%$\mathrm{HV}_{\max}$ $\uparrow$ & \#NDP & Time (s) $\downarrow$ & \%$\mathrm{HV}_{\max}$ $\uparrow$ & \#NDP & Time (s) $\downarrow$ \\
\midrule
DPA-a (exact) & {\tiny 100.0000} & {\tiny 2,063} & {\tiny 167.5} & {\tiny 100.0000} & {\tiny 30,288} & {\tiny 15,579} \\
DCM (exact) & {\tiny 100.0000} & {\tiny 2,063} & {\tiny 368.5} & {\tiny 99.9813} & {\tiny 17,424} & {\tiny 10,016}\rlap{$^{\dagger}$} \\
$\varepsilon$-CM & {\tiny 99.9993 $\pm$ 0.0001} & {\tiny 2,126 $\pm$ 28} & {\tiny 10,000 $\pm$ 0}\rlap{$^{\dagger}$} & {\tiny 99.9588 $\pm$ 0.0035} & {\tiny 11,752 $\pm$ 368} & {\tiny 10,000 $\pm$ 0}\rlap{$^{\dagger}$} \\
WSM & {\tiny 99.1246} & {\tiny 154} & {\tiny 5,005} & {\tiny 98.6132} & {\tiny 1,235} & {\tiny 5,597} \\
QAOA (simulation) & {\tiny $\boldsymbol{100.0000 \pm 0.0000}$}\rlap{$^{\ddagger}$} & {\tiny --} & {\tiny 360 $\pm$ 105} & {\tiny $\boldsymbol{100.0000 \pm 0.0000}$}\rlap{$^{\ddagger}$} & {\tiny --} & {\tiny 10,000 $\pm$ 0}\rlap{$^{\dagger}$} \\
QAOA (\texttt{ibm\_fez}) & {\tiny 99.9996 $\pm$ 0.0004} & {\tiny --} & {\tiny 2,500 $\pm$ 0}\rlap{$^{\dagger}$} & {\tiny --} & {\tiny --} & {\tiny --} \\
\midrule
MO-QUCO (CPU) & {\tiny $\underline{99.9999 \pm 0.0001}$} & {\tiny 2,692 $\pm$ 3} & {\tiny 118.0 $\pm$ 0.3} & {\tiny 99.9954 $\pm$ 0.0003} & {\tiny 42,095 $\pm$ 56} & {\tiny $\underline{117.7 \pm 0.6}$} \\
\;\; + bit-flip & {\tiny $\boldsymbol{100.0000 \pm 0.0000}$}\rlap{$^{\ddagger}$} & {\tiny 2,591 $\pm$ 1} & {\tiny 119.8 $\pm$ 0.3} & {\tiny $\boldsymbol{100.0000 \pm 0.0000}$}\rlap{$^{\ddagger}$} & {\tiny 47,668 $\pm$ 6} & {\tiny 271.0 $\pm$ 1.1} \\
pMO-QUCO (GPU) & {\tiny $\underline{99.9999 \pm 0.0000}$} & {\tiny 2,885 $\pm$ 1} & {\tiny $\boldsymbol{0.9 \pm 0.0}$} & {\tiny $\underline{99.9963 \pm 0.0011}$} & {\tiny 43,904 $\pm$ 40} & {\tiny $\boldsymbol{0.9 \pm 0.0}$} \\
\;\; + bit-flip & {\tiny $\boldsymbol{100.0000 \pm 0.0000}$}\rlap{$^{\ddagger}$} & {\tiny 2,599 $\pm$ 2} & {\tiny $\underline{2.7 \pm 0.0}$} & {\tiny $\boldsymbol{100.0000 \pm 0.0000}$}\rlap{$^{\ddagger}$} & {\tiny 47,658 $\pm$ 12} & {\tiny 154.2 $\pm$ 0.5} \\
\bottomrule
\end{tabular}
}
\caption{Final performance on the 42-node MO-MaxCut benchmark. 
A dagger ($\dagger$) marks a run stopped at its sampling budget. A double dagger ($\ddagger$) marks an error within $2 \times 10^{-8}$. A dash (--) marks an unavailable value due to no reported values in the original paper. Among the non-exact methods, the best value in each \%$\mathrm{HV}_{\max}$ and Time column is in bold and the second best is underlined. The pMO-QUCO rows use one NVIDIA H200 GPU for the preference sweep and the CPU for bit-flip refinement.}
\label{tab:main}
\end{table*}
\begin{figure*}[t]
\centering
\includegraphics[width=0.94\textwidth]{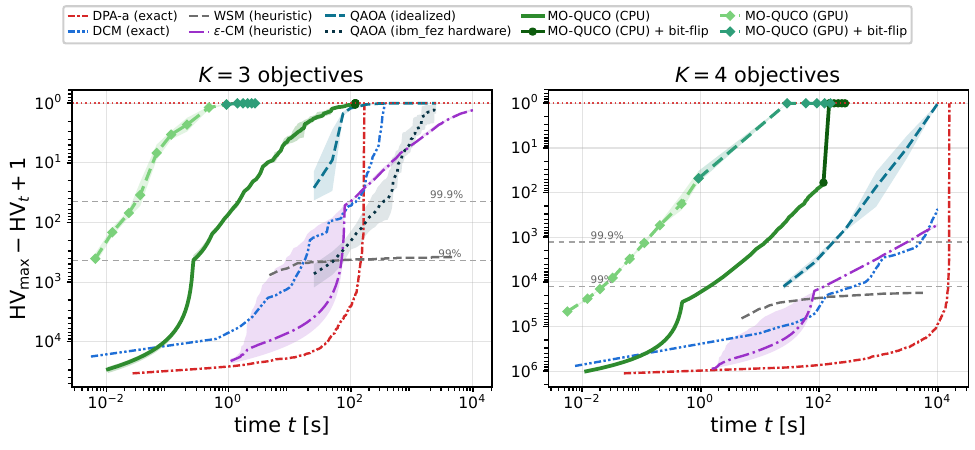}
\caption{Convergence on the 42-node MO-MaxCut benchmark with $K=3$ (left) and $K=4$ (right) objectives. The x-axis shows time on a logarithmic scale. The y-axis shows $\mathrm{HV}_{\max}-\mathrm{HV}_t+1$ on a reversed logarithmic scale, so higher is better. All classical methods use measured wall-clock time, while QAOA uses the benchmark's idealized sampling-time estimate at $10$\,kHz. Shaded bands show the min-to-max range over five repetitions for QAOA and three seeds for the other stochastic methods. The $1$-bit-flip steps in the MO-QUCO curves use darker colors and markers.
}
\label{fig:main}
\end{figure*}
In Steps~\ref{alg:main:pga-start} to~\ref{alg:main:pga-end}, we run $T$ PGA steps using the adjacency formulation $g^{A}_{\lambda}$ on the whole batch. Each iteration consists of one matrix product followed by an entry-wise projection. The resulting points are binarized in Step~\ref{alg:main:binarize} and added to $\hat{\mathcal{Z}}$ in Step~\ref{alg:main:collect}. After sweeping all $J$ preferences, Step~\ref{alg:main:nondom} retains only the non-dominated cuts using the $\textsc{NonDom}(\cdot)$ function. Steps~\ref{alg:main:bitflip-start} to~\ref{alg:main:bitflip-end} generate the $1$-bit-flip neighbors using the \textsc{BitFlip} function and apply the \textsc{NonDom} function after every pass until no new cut is found or $P$ passes have been completed.

The CPU implementation of Algorithm~\ref{alg:main} stacks the $B$ random initializations into matrix $X$ and updates the entire batch with one matrix multiplication at each PGA iteration.

The supplementary material presents pMO-QUCO, which follows Algorithm~\ref{alg:main} and partitions the preference vectors into blocks. Within each block, the GPU accelerates linear scalarization in Step~\ref{alg:main:aggregate}, sampling in Step~\ref{alg:main:sample}, binarization in Step~\ref{alg:main:binarize}, and the batched matrix multiplication in Steps~\ref{alg:main:pga-start} to~\ref{alg:main:pga-end}.






\section{Experimental Results}

We evaluate our algorithm on the MO-MaxCut benchmark of \citet{kotil2025quantum}, which contains two instances with $K=3$ and $K=4$ objectives on a shared graph of $42$ nodes and $46$ edges. The layer weights are sampled i.i.d.\ from $\mathcal{N}(0,1)$. Results for additional graphs (with different sizes and objectives) are given in the supplementary material.

\paragraph{1) Baselines:}
We compare four classical solvers evaluated by \citet{kotil2025quantum}. 
The first two are the exact multi-objective integer-programming methods DPA-a \citep{dachert2024simple} and DCM \citep{boland2017new}, which enumerate the non-dominated set with CPLEX \citep{manual1987ibm}, and the randomized heuristics $\varepsilon$-CM \citep{haimes1971bicriterion}.
The third one is WSM \citep{zadeh1963optimality}, which solve sampled $\varepsilon$-constraint and weighted-sum programs with Gurobi \citep{gurobi2023gurobi}. 
The fourth one is QAOA \citep{kotil2025quantum} at its best reported configuration, using the depth-$6$ matrix-product-state simulation results distributed with the benchmark and the \texttt{ibm\_fez} hardware runs available for $K=3$. 
We run every baseline and MO-QUCO on the same server, and we adopt the QAOA simulation and \texttt{ibm\_fez} hardware results as reported by the benchmark.
For runtime evaluation, both the CPU and GPU implementations use measured wall-clock time. The QAOA runtime follows the idealized estimate of \citep{kotil2025quantum}. It is computed by dividing the cumulative number of circuit shots by $10{,}000$ shots per second.
\footnote{QAOA time is an idealized hardware sampling estimate based on a $10^{-4}$\,s repetition delay and excludes circuit compilation.} MO-QUCO and $\varepsilon$-CM are run with three seeds. QAOA reports five repetitions. The remaining baselines are deterministic. 

\paragraph{2) Metrics:}
To compare algorithms in the multi-objective setting, we report the number of non-dominated points (NDP, cf.\ Definition~\ref{def:pareto}) in the final archive and 
its hypervolume (HV) \citep{zitzler1999multiobjective, yang2019efficient}.
For this maximization problem, fix a reference point
$r\in\R^K$ such that $r_k\leq f_k(z)$ for every feasible cut
$z$ and every $k\in[K]$. For each feasible cut $z$, let
$[r,F(z)]$ denote the axis aligned box whose lower corner is $r$
and upper corner is $F(z)$.
The hypervolume of an archive $\hat{\mathcal{Z}}$ is defined as
\begin{equation}
\mathrm{HV}_r(\hat{\mathcal{Z}})
:=\operatorname{vol}_K(\cup_{z\in\hat{\mathcal{Z}}}[r,F(z)]),
\label{eq:hv}
\end{equation}
where $\operatorname{vol}_K$ denotes the $K$ dimensional volume.
For a fixed reference point, a larger HV means
that the archive dominates a larger hypervolume (i.e., region of the objective space).

We set $r$ to the per-objective lower bounds provided by the benchmark and compute HV with the \texttt{pygmo} library \citep{biscani2020parallel}. In the tables, we report $\%\mathrm{HV}_{\max}
:=
100\,\frac{\mathrm{HV}_r(\hat{\mathcal{Z}})}{\mathrm{HV}_{\max}},$
where $\mathrm{HV}_{\max}$ is the hypervolume of the benchmark's exact non-dominated front evaluated using the same reference point.

\paragraph{3) Main Results:} Table~\ref{tab:main} reports the final results returned by each method, and Figure~\ref{fig:main} reports the HV progress over time.
MO-QUCO and pMO-QUCO combine near-optimal quality with the lowest runtime on both instances.
The exact solvers DPA-a and DCM enumerate the true non-dominated set, but their cost grows rapidly with the number of objectives.
DPA-a solves the $K=3$ instance in $167.5$\,s but needs $15{,}579$\,s for $K=4$, and DCM no longer finishes $K=4$ within the $10{,}000$\,s budget.
In contrast, the solve time of MO-QUCO is independent of the number of objectives ($118.0$ versus $117.7$\,s), while it already reaches $99.9999\%$ of $\mathrm{HV}_{\max}$ for $K=3$ and $99.9954\%$ for $K=4$.
Furthermore, pMO-QUCO requires only $0.9$\,s on both instances and reaches $99.9999\%$ and $99.9963\%$ of $\mathrm{HV}_{\max}$ for $K=3$ and $K=4$ respectively.
The bit-flip refinement then reaches $100\%$ of $\mathrm{HV}_{\max}$ on both instances.
On the $K=4$ instance, the CPU and GPU versions achieve runtime speedups of $57\times$ and $101\times$ over DPA-a, the fastest exact classical method.
Compared with the randomized heuristics $\varepsilon$-CM and WSM, our methods are both faster and more accurate.
Our methods also outperform the quantum baseline.
We take the strongest QAOA result reported by the benchmark, its depth-$6$ matrix-product-state simulation.
Even under the idealized $10$\,kHz sampling clock, our methods reach $100\%$ of $\mathrm{HV}_{\max}$ faster than this simulation on $K=4$, with MO-QUCO and pMO-QUCO achieving speedups of $37\times$ and $65\times$ respectively. This comparison already favors QAOA, as real hardware execution is slower and affected by noise.



\paragraph{4) ``Anytime'' Convergence Behavior:}
Figure~\ref{fig:main} shows that MO-QUCO improves its approximation rapidly to the Pareto front.
It reaches $99\%$ of $\mathrm{HV}_{\max}$ within $0.3$\,s for $K=3$ and $1.4$\,s for $K=4$, and reaches $99.9\%$ after $1.5$\,s and $12$\,s respectively. From $0.1$\,s onward, its curve stays ahead of all baselines at matched time. pMO-QUCO reaches more than $99.99\%$ of $\mathrm{HV}_{\max}$ in about $0.9$\,s on both instances. With the $1$-bit-flip refinement, pMO-QUCO reaches $100\%$ in $2.7$\,s for $K=3$ and $154.2$\,s for $K=4$. The exact methods catch up only when enumeration terminates. 
The heuristic baselines improve much slower. 
Even under the benchmark's idealized $10$\,kHz clock, QAOA needs $360$\,s on average to reach $\mathrm{HV}_{\max}$ for $K=3$ and its full $10{,}000$\,s budget for $K=4$.

\paragraph{5) Archive Diversity:}
Although NDP does not mean that the returned vectors lie on the exact Pareto front, it measures archive diversity. The WSM baseline produces much smaller archives than our methods and it reaches only $99.1246\%$ for $K=3$ and $98.6132\%$ for $K=4$. In contrast, approximate scalarized optimization and bit-flip refinement in our method can generate candidates beyond the supported solutions returned by exact WSM.

\paragraph{6) Summary of Experimental Findings:}
MO-QUCO and pMO-QUCO achieve near-optimal HV much faster than the exact, heuristic, and quantum baselines. 
In the supplementary material, our ablation studies show that PGA provides the main improvement and the $1$-bit-flip refinement is essential. 
The sensitivity analysis shows more sampled points improve solution quality, while the number of PGA steps $T$ and the step size $\alpha$ need to balance optimization with NDP diversity.
We also generate strongly conflicting instances and show that our methods achieve high HV and high recall of NDPs.

\section{Conclusions}

In this paper, we developed a differentiable approach for the multi-objective maximum cut (MaxCut) problem. We used linear scalarization to convert the multi-objective problem to a single-objective signed-weight MaxCut problem, enabling the use of gradient-based optimization methods. For a fixed preference vector, we characterized the fixed points of the resulting optimization problem under the adopted adjacency-based formulation. 
We then analyzed the preference-conditioned fixed points associated with a given binary solution, characterizing the set of preferences for which the solution remains a fixed point and establishing their connection to the Pareto properties. Experimentally, across different numbers of objectives and graph sizes, including the challenging instances with conflicting objectives (see supplementary material for detail), we demonstrated that our algorithms achieve state-of-the-art performance compared with exact and heuristic-based methods. 


\bibliography{references}

@incollection{karp2009reducibility,
  title={Reducibility among combinatorial problems},
  author={Karp, Richard M},
  booktitle={50 Years of Integer Programming 1958-2008: from the Early Years to the State-of-the-Art},
  pages={219--241},
  year={2009},
  publisher={Springer}
}

@article{alkhouri2025scalable,
  title={A Scalable Lift-and-Project Differentiable Approach For the Maximum Cut Problem},
  author={Alkhouri, Ismail and Wu, Mian and Yu, Cunxi and Liu, Jia and Wang, Rongrong and Velasquez, Alvaro},
  journal={arXiv preprint arXiv:2509.18612},
  year={2025}
}

@article{alkhouri2024differentiable,
  title={Differentiable Quadratic Optimization For The Maximum Independent Set Problem},
  author={Alkhouri, Ismail and Denmat, Cedric Le and Li, Yingjie and Yu, Cunxi and Liu, Jia and Wang, Rongrong and Velasquez, Alvaro},
  journal={arXiv preprint arXiv:2406.19532},
  year={2024}
}

@article{zadeh1963optimality,
  title={Optimality and non-scalar-valued performance criteria},
  author={Zadeh, Lofti},
  journal={IEEE transactions on Automatic Control},
  volume={8},
  number={1},
  pages={59--60},
  year={1963},
  publisher={IEEE}
}

@article{zhang2007moea,
  title={MOEA/D: A multiobjective evolutionary algorithm based on decomposition},
  author={Zhang, Qingfu and Li, Hui},
  journal={IEEE Transactions on evolutionary computation},
  volume={11},
  number={6},
  pages={712--731},
  year={2007},
  publisher={IEEE}
}

@article{singh2026divide,
  title={Divide and Learn: Multi-Objective Combinatorial Optimization at Scale},
  author={Singh, Esha and Wu, Dongxia and Yang, Chien-Yi and Rosing, Tajana and Yu, Rose and Ma, Yi-An},
  journal={arXiv preprint arXiv:2602.11346},
  year={2026}
}

@article{kotil2025quantum,
  title={Quantum approximate multi-objective optimization},
  author={Kotil, Ayse and Pelofske, Elijah and Riedm{\"u}ller, Stephanie and Egger, Daniel J and Eidenbenz, Stephan and Koch, Thorsten and Woerner, Stefan},
  journal={Nature Computational Science},
  pages={1--10},
  year={2025},
  publisher={Nature Publishing Group US New York}
}

@article{dachert2024simple,
  title={A simple, efficient and versatile objective space algorithm for multiobjective integer programming},
  author={D{\"a}chert, Kerstin and Fleuren, Tino and Klamroth, Kathrin},
  journal={Mathematical Methods of Operations Research},
  volume={100},
  number={1},
  pages={351--384},
  year={2024},
  publisher={Springer}
}

@article{boland2017new,
  title={A new method for optimizing a linear function over the efficient set of a multiobjective integer program},
  author={Boland, Natashia and Charkhgard, Hadi and Savelsbergh, Martin},
  journal={European journal of operational research},
  volume={260},
  number={3},
  pages={904--919},
  year={2017},
  publisher={Elsevier}
}

@article{haimes1971bicriterion,
  title={On a bicriterion formulation of the problems of integrated system identification and system optimization},
  author={Haimes, Yacov},
  journal={IEEE transactions on systems, man, and cybernetics},
  number={3},
  pages={296--297},
  year={1971},
  publisher={Institute of Electrical and Electronics Engineers (IEEE)}
}

@article{biscani2020parallel,
  title={A parallel global multiobjective framework for optimization: pagmo},
  author={Biscani, Francesco and Izzo, Dario},
  journal={Journal of Open Source Software},
  volume={5},
  number={53},
  pages={2338},
  year={2020}
}

@article{manual1987ibm,
  title={Ibm ilog cplex optimization studio},
  author={Manual, CPLEX User’s},
  journal={Version},
  volume={12},
  number={1987-2018},
  pages={1},
  year={1987}
}

@article{gurobi2023gurobi,
  title={Gurobi optimizer reference manual},
  author={Gurobi Optimization, LLC},
  year={2023}
}

@article{zitzler1999multiobjective,
  title={Multiobjective evolutionary algorithms: a comparative case study and the strength Pareto approach},
  author={Zitzler, Eckart and Thiele, Lothar},
  journal={IEEE transactions on Evolutionary Computation},
  volume={3},
  number={4},
  pages={257--271},
  year={1999},
  publisher={IEEE}
}

@article{sun2026mutation,
  title={Mutation-Guided Differentiable Quadratic Combinatorial Optimization},
  author={Sun, Yongliang and Alkhouri, Ismail and Huang, Cheng-Han and Velasquez, Alvaro and Jha, Susmit and Wang, Rongrong},
  journal={arXiv preprint arXiv:2605.06921},
  year={2026}
}

@article{farhi2014quantum,
  title={A quantum approximate optimization algorithm},
  author={Farhi, Edward and Goldstone, Jeffrey and Gutmann, Sam},
  journal={arXiv preprint arXiv:1411.4028},
  year={2014}
}

@book{ehrgott2005multicriteria,
  title={Multicriteria optimization},
  author={Ehrgott, Matthias},
  year={2005},
  publisher={Springer}
}

@article{goemans1995improved,
  title={Improved approximation algorithms for maximum cut and satisfiability problems using semidefinite programming},
  author={Goemans, Michel X and Williamson, David P},
  journal={Journal of the ACM (JACM)},
  volume={42},
  number={6},
  pages={1115--1145},
  year={1995},
  publisher={ACM New York, NY, USA}
}

@article{schuetz2022combinatorial,
  title={Combinatorial optimization with physics-inspired graph neural networks},
  author={Schuetz, Martin JA and Brubaker, J Kyle and Katzgraber, Helmut G},
  journal={Nature Machine Intelligence},
  volume={4},
  number={4},
  pages={367--377},
  year={2022},
  publisher={Nature Publishing Group UK London}
}

@article{barahona1988application,
  title={An application of combinatorial optimization to statistical physics and circuit layout design},
  author={Barahona, Francisco and Gr{\"o}tschel, Martin and J{\"u}nger, Michael and Reinelt, Gerhard},
  journal={Operations Research},
  volume={36},
  number={3},
  pages={493--513},
  year={1988},
  publisher={INFORMS}
}

@article{snir2006using,
  title={Using max cut to enhance rooted trees consistency},
  author={Snir, Sagi and Rao, Satish},
  journal={IEEE/ACM transactions on computational biology and bioinformatics},
  volume={3},
  number={4},
  pages={323--333},
  year={2006},
  publisher={IEEE}
}

@article{lin2022pareto,
  title={Pareto set learning for neural multi-objective combinatorial optimization},
  author={Lin, Xi and Yang, Zhiyuan and Zhang, Qingfu},
  journal={arXiv preprint arXiv:2203.15386},
  year={2022}
}

@article{allmendinger2022if,
  title={What if we increase the number of objectives? Theoretical and empirical implications for many-objective combinatorial optimization},
  author={Allmendinger, Richard and Jaszkiewicz, Andrzej and Liefooghe, Arnaud and Tammer, Christiane},
  journal={Computers \& Operations Research},
  volume={145},
  pages={105857},
  year={2022},
  publisher={Elsevier}
}

@article{jozefowiez2008multi,
  title={Multi-objective vehicle routing problems},
  author={Jozefowiez, Nicolas and Semet, Fr{\'e}d{\'e}ric and Talbi, El-Ghazali},
  journal={European journal of operational research},
  volume={189},
  number={2},
  pages={293--309},
  year={2008},
  publisher={Elsevier}
}

@article{angel2006approximation,
  title={Approximation algorithms for the bi-criteria weighted max-cut problem},
  author={Angel, Eric and Bampis, Evripidis and Gourv{\`e}s, Laurent},
  journal={Discrete Applied Mathematics},
  volume={154},
  number={12},
  pages={1685--1692},
  year={2006},
  publisher={Elsevier}
}

@article{de2026quadratic,
  title={Quadratic convex reformulations for multiObjective binary quadratic programming: M. De Santis et al.},
  author={De Santis, Marianna and L{\'e}tocart, Lucas and Zhang, Yue},
  journal={Journal of Global Optimization},
  pages={1--32},
  year={2026},
  publisher={Springer}
}

@article{tao2026multi,
  title={Multi-Objective Optimization by Quantum-Annealing-Inspired Algorithms},
  author={Tao, Xian-Zhe and Mosharev, Pavel and Yung, Man-Hong},
  journal={arXiv preprint arXiv:2604.26477},
  year={2026}
}

@article{yang2019efficient,
  title={Efficient computation of expected hypervolume improvement using box decomposition algorithms},
  author={Yang, Kaifeng and Emmerich, Michael and Deutz, Andr{\'e} and B{\"a}ck, Thomas},
  journal={Journal of Global Optimization},
  volume={75},
  number={1},
  pages={3--34},
  year={2019},
  publisher={Springer}
}

@article{tarjan1977finding,
  title={Finding a maximum independent set},
  author={Tarjan, Robert Endre and Trojanowski, Anthony E},
  journal={SIAM Journal on Computing},
  volume={6},
  number={3},
  pages={537--546},
  year={1977},
  publisher={SIAM}
}

@article{dantzig1954solution,
  title={Solution of a large-scale traveling-salesman problem},
  author={Dantzig, George and Fulkerson, Ray and Johnson, Selmer},
  journal={Journal of the operations research society of America},
  volume={2},
  number={4},
  pages={393--410},
  year={1954},
  publisher={INFORMS}
}

@article{lin2019pareto,
  title={Pareto multi-task learning},
  author={Lin, Xi and Zhen, Hui-Ling and Li, Zhenhua and Zhang, Qing-Fu and Kwong, Sam},
  journal={Advances in neural information processing systems},
  volume={32},
  year={2019}
}

@inproceedings{navon2020learning,
  title={Learning the pareto front with hypernetworks},
  author={Navon, Aviv and Shamsian, Aviv and Fetaya, Ethan and Chechik, Gal},
  booktitle={International conference on learning representations},
  year={2020}
}

@article{konen2025supportedness,
  title={On supportedness in multi-objective combinatorial optimization},
  author={K{\"o}nen, David and Stiglmayr, Michael},
  journal={arXiv preprint arXiv:2501.13842},
  year={2025}
}

\newpage
\appendix
\onecolumn
\par\noindent\rule{\textwidth}{1pt}
\begin{center}
{\Large \bf Appendix}
\end{center}
\vspace{-0.1in}
\par\noindent\rule{\textwidth}{1pt}
\appendix

\section{Proofs}\label{app sec: proof}

\subsection{Proof of Lemma 1}
\begin{proof}
The projection onto $[-1,1]^{n}$ clips each coordinate separately. Fix a coordinate $v$. Under Equation~(5) in the main paper, $x_v$ is mapped to
\[
\min\{1,\max\{-1,\,x_v+\alpha(L_\lambda x)_v\}\}.
\]
Thus, $x$ is a PGA fixed point iff every coordinate is mapped to itself. We consider three cases.

\emph{Case 1: $(L_\lambda x)_v=0$.}
In this case, the updated coordinate equals $x_v$, so coordinate $v$ is unchanged for every $x_v\in[-1,1]$.

\emph{Case 2: $(L_\lambda x)_v>0$.}
Since $\alpha>0$, we have $x_v+\alpha(L_\lambda x)_v>x_v$. If $x_v<1$, clipping returns either a value strictly between $x_v$ and $1$ or the upper endpoint $1$, and hence the updated coordinate is strictly greater than $x_v$. Therefore, coordinate $v$ is unchanged iff $x_v=1$. Equivalently,
\[
x_v=\operatorname{sign}\!\bigl((L_\lambda x)_v\bigr).
\]

\emph{Case 3: $(L_\lambda x)_v<0$.}
Now $x_v+\alpha(L_\lambda x)_v<x_v$. If $x_v>-1$, clipping returns either a value strictly between $-1$ and $x_v$ or the lower endpoint $-1$, and hence the updated coordinate is strictly less than $x_v$. Therefore, coordinate $v$ is unchanged iff $x_v=-1$, which again is equivalent to
\[
x_v=\operatorname{sign}\!\bigl((L_\lambda x)_v\bigr).
\]

Combining the three cases, coordinate $v$ is unchanged iff either $(L_\lambda x)_v=0$ or $x_v=\operatorname{sign}\!\bigl((L_\lambda x)_v\bigr)$. Hence, $x$ is a PGA fixed point iff
\[
x_v=\operatorname{sign}\!\bigl((L_\lambda x)_v\bigr)
\]
for every coordinate $v$ with $(L_\lambda x)_v\neq0$. The assumption $x\notin N(L_\lambda)$ separates these PGA fixed points from stationary points. Since the argument holds for every $\alpha>0$, the result follows.
\end{proof}

\subsection{Proof of Theorem 1}

\begin{proof}
Fix a vertex $v \in [n]$. Since $L_\lambda=D_{W_\lambda}-W_\lambda$, the $v$-th coordinate of $L_\lambda s$ is
\[
(L_\lambda s)_v
=
(D_{W_\lambda}s)_v-(W_\lambda s)_v.
\]
The matrix $D_{W_\lambda}$ is diagonal with
$
(D_{W_\lambda})_{vv}
=
\sum_u(W_\lambda)_{uv}.
$
Therefore,
\[
(D_{W_\lambda}s)_v
=
\left(\sum_u(W_\lambda)_{uv}\right)s_v.
\]
Since $W_\lambda$ is symmetric, matrix-vector multiplication gives
$
(W_\lambda s)_v
=
\sum_u(W_\lambda)_{uv}s_u.
$
So
\[
\begin{aligned}
(L_\lambda s)_v
&=
\left(\sum_u(W_\lambda)_{uv}\right)s_v
-
\sum_u(W_\lambda)_{uv}s_u\\
&=
\sum_u(W_\lambda)_{uv}(s_v-s_u).
\end{aligned}
\]

We now partition the sum according to whether $u$ and $v$ lie on the same side of the cut:
\[
\begin{aligned}
(L_\lambda s)_v
&=
\sum_{u:\,s_u=s_v}(W_\lambda)_{uv}(s_v-s_u)\\
&\quad+
\sum_{u:\,s_u\neq s_v}(W_\lambda)_{uv}(s_v-s_u).
\end{aligned}
\]
The first sum is zero because $s_u=s_v$. In the second sum, $s_u=-s_v$ because $s_u,s_v\in\{-1,1\}$. Hence, $s_v-s_u=2s_v$, and
\[
\begin{aligned}
(L_\lambda s)_v
&=
2s_v\sum_{u:\,s_u\neq s_v}(W_\lambda)_{uv}\\
&=
2s_vC_v(s).
\end{aligned}
\]
Multiplying by $s_v$ and using $s_v^2=1$ gives the identity
\[
s_v(L_\lambda s)_v
=
2C_v(s).
\]

Next, we apply Lemma 1 in the main paper. If $(L_\lambda s)_v\neq0$, coordinate $v$ is unchanged iff $s_v$ has the same sign as $(L_\lambda s)_v$. Since $s_v\in\{-1,1\}$, this condition is equivalent to
\[
s_v(L_\lambda s)_v>0.
\]
If $(L_\lambda s)_v=0$, coordinate $v$ is also unchanged and $s_v(L_\lambda s)_v=0$. Combining the two cases, coordinate $v$ is unchanged iff
\[
s_v(L_\lambda s)_v\geq0.
\]
Since $s_v(L_\lambda s)_v=2C_v(s)$, this condition is equivalent to $C_v(s)\geq0$. The full vector $s$ is unchanged iff every coordinate is unchanged. Therefore, $s$ is a PGA fixed point iff $C_v(s)\geq0$ for every $v\in[n]$. The assumption $s\notin N(L_\lambda)$ excludes stationary points and agrees with Definition 5 in the main paper.

\end{proof}

\subsection{Proof of Theorem 2}
\begin{proof}
Fix a vertex $v\in[n]$, and let $s^{(v)}$ be the spin vector obtained by moving $v$ to the other side of the cut. Thus, $s^{(v)}_v=-s_v$, while $s^{(v)}_u=s_u$ for every $u\neq v$. Define
\[
\Delta_v(s)
:=
\Cut_{G_\lambda}\!\bigl(s^{(v)}\bigr)
-
\Cut_{G_\lambda}(s).
\]
Only edges incident to $v$ can change their cut status. If $s_u=s_v$, then the edge $\{u,v\}$ is uncut under $s$ and becomes cut after flipping $v$, so it contributes $(W_\lambda)_{uv}$ to $\Delta_v(s)$. If $s_u\neq s_v$, then the edge is cut under $s$ and becomes uncut, so it contributes $-(W_\lambda)_{uv}$. Therefore,
\[
\Delta_v(s)
=
\sum_{u:\,s_u=s_v}(W_\lambda)_{uv}
-
\sum_{u:\,s_u\neq s_v}(W_\lambda)_{uv}.
\]
Because $s_v s_u=1$ in the first sum and $s_v s_u=-1$ in the second, the two sums can be combined as
\[
\Delta_v(s)
=
\sum_u (W_\lambda)_{uv}s_v s_u
=
s_v(W_\lambda s)_v.
\]
This identity holds without any sign restriction on the edge weights.

It remains to characterize when the PGA update under the adjacency formulation leaves $s$ unchanged. Since projection onto $[-1,1]^n$ acts coordinate-wise, consider coordinate $v$ of
\[
\Proj_{[-1,1]^n}\!\bigl(s-\alpha W_\lambda s\bigr).
\]
If $s_v=1$, then the unprojected coordinate equals
\[
1-\alpha(W_\lambda s)_v.
\]
Projection maps this value back to $1$ if and only if $(W_\lambda s)_v\leq0$. If $s_v=-1$, then the unprojected coordinate equals
\[
-1-\alpha(W_\lambda s)_v,
\]
which is projected back to $-1$ if and only if $(W_\lambda s)_v\geq0$. Both cases are equivalent to
\[
s_v(W_\lambda s)_v\leq0.
\]
The conclusion also includes the equality case, in which the update direction at coordinate $v$ is zero. Using the identity derived above, coordinate $v$ is therefore unchanged if and only if $\Delta_v(s)\leq0$. Finally, the full vector $s$ is unchanged if and only if this condition holds for every $v\in[n]$, which is precisely the condition that no $1$-bit flip improves the scalarized cut value.
\end{proof}

\subsection{Proof of Lemma 2  }
\begin{proof}
Let $z^\star$ be supported. By Definition 7 in the main paper, there exists $\lambda\in\Delta_{+}^{K}$ such that $z^\star$ globally maximizes $h_\lambda$. Suppose that a cut $z$ dominates $z^\star$. Dominance gives
\[
f_k(z)\geq f_k(z^\star)
\qquad\text{for every }k\in[K].
\]
Moreover, the inequality is strict for at least one objective. Since $\lambda_k>0$ for every $k$, we have
\[
h_\lambda(z)-h_\lambda(z^\star)
=
\sum_{k=1}^{K}\lambda_k
\bigl(f_k(z)-f_k(z^\star)\bigr)
>0.
\]
This contradicts the global optimality of $z^\star$. Hence no cut dominates $z^\star$. Therefore, $z^\star$ is Pareto-optimal.
\end{proof}

\subsection{Proof of Theorem 3  }
\begin{proof}
By Theorem 2 in the main paper, condition~(i) holds if and only if
\[
s_v(W_\lambda s)_v\leq0
\qquad\text{for every }v\in[n].
\]
Using $W_\lambda=\sum_{k=1}^K\lambda_kW^{(k)}$ and the definition of the flip-gain vector, we obtain
\[
\begin{aligned}
s_v(W_\lambda s)_v
&=
\sum_{k=1}^K\lambda_k
s_v\bigl(W^{(k)}s\bigr)_v\\
&=
\sum_{k=1}^K\lambda_k\Delta_v^{(k)}(s)\\
&=
\lambda^\top\gamma_v(s).
\end{aligned}
\]
Therefore, condition~(i) holds if and only if $\lambda^\top\gamma_v(s)\leq0$ for every $v\in[n]$, which is condition~(ii). By Definition 8 in the main paper, condition~(ii) holds exactly when $\lambda\in\Lambda^{A}(s)$, which is condition~(iii). Therefore, all three conditions are equivalent.
\end{proof}

\subsection{Proof of Proposition 1  }

\begin{proof}
\emph{First inclusion.}
Let $s\in\mathcal{Z}_{\mathrm{sup}}$. By definition, there exists $\lambda\in\Delta_+^K$ such that $s$ globally maximizes $h_\lambda$. For every $v\in[n]$, the neighboring cut $s^{(v)}$ is feasible. Hence,
\[
\lambda^\top\gamma_v(s)
=
h_\lambda\bigl(s^{(v)}\bigr)-h_\lambda(s)
\leq0.
\]
It follows that $\lambda\in\Lambda^A(s)$. Since $\lambda\in\Delta_+^K$, we have $\Lambda^A(s)\cap\Delta_+^K\neq\emptyset$. Therefore, $s\in\mathcal{Z}_{\mathrm{fix}}$, which proves
\[
\mathcal{Z}_{\mathrm{sup}}
\subseteq
\mathcal{Z}_{\mathrm{fix}}.
\]

\emph{Second inclusion.}
Let $s\in\mathcal{Z}_{\mathrm{fix}}$. Then, there exists $\lambda\in\Lambda^A(s)\cap\Delta_+^K$. Suppose for contradiction that $s\notin\mathcal{Z}_{\mathrm{loc}}$. There must exist a vertex $v\in[n]$ such that $s^{(v)}$ Pareto dominates $s$. Thus,
\[
\Delta_v^{(k)}(s)\geq0
\quad\text{for every }k\in[K],
\]
and at least one of these inequalities is strict. Since $\lambda_k>0$ for every $k\in[K]$, we obtain
\[
\lambda^\top\gamma_v(s)
=
\sum_{k=1}^K
\lambda_k\Delta_v^{(k)}(s)
>0.
\]
This contradicts $\lambda\in\Lambda^A(s)$, which requires $\lambda^\top\gamma_v(s)\leq0$ for every $v\in[n]$. Hence, $s\in\mathcal{Z}_{\mathrm{loc}}$, and therefore
\[
\mathcal{Z}_{\mathrm{fix}}
\subseteq
\mathcal{Z}_{\mathrm{loc}}.
\]
Combining the two inclusions proves the result.
\end{proof}


\section{Parallelized MO-QUCO (pMO-QUCO)}\label{app:pmo-quco}

Algorithm~\ref{alg:parallel} gives \emph{pMO-QUCO}, the GPU-parallelizable implementation.
\begin{algorithm}[H]
\caption{pMO-QUCO}
\label{alg:parallel}
\begin{algorithmic}[1]
\REQUIRE graphs $\{G^{(k)} = (V, E, W^{(k)})\}_{k \in [K]}$, batch size $B$, PGA steps $T$, step size $\alpha$, number of preferences $J$, block size $M$, maximum number of bit-flip passes $P$
\STATE \textbf{Initialization:} candidate set $\hat{\mathcal{Z}} \gets \emptyset$, sample $\{\lambda_j\}_{j=1}^{J}$ i.i.d.\ uniformly from $\Delta^K$, partition $[J]$ into blocks $\mathcal{I}_1,\ldots,\mathcal{I}_R$ of size $M$
\FOR{$r = 1$ to $R$}\label{alg:parallel:block-loop}
    \STATE $W_{\lambda_j} \gets \sum_{k \in [K]} \lambda_{j,k}\, W^{(k)}$ for all $j \in \mathcal{I}_r$ \COMMENT{GPU}\label{alg:parallel:aggregate}
    \STATE sample $X_j \sim \mathcal{U}[-1,1]^{B \times n}$ for all $j \in \mathcal{I}_r$ \COMMENT{GPU}\label{alg:parallel:sample}
    \FOR{$t = 1$ to $T$}\label{alg:parallel:pga-start}
        \STATE $X_j \gets \Proj_{[-1,1]}\!\bigl(X_j - \alpha\, X_j W_{\lambda_j}\bigr)$ for all $j \in \mathcal{I}_r$ \COMMENT{GPU: all $(j,b)$}
    \ENDFOR\label{alg:parallel:pga-end}
    \STATE $\mathcal{Z}_r \gets \{\ind\{x_{j,b} > 0\} : j \in \mathcal{I}_r,\ b \in [B]\}$ \COMMENT{GPU}\label{alg:parallel:binarize}
    \STATE $\hat{\mathcal{Z}} \gets \hat{\mathcal{Z}} \cup \mathcal{Z}_r$\label{alg:parallel:collect}
\ENDFOR
\STATE $\hat{\mathcal{Z}} \gets \textsc{NonDom}\bigl(\hat{\mathcal{Z}}\bigr)$ \COMMENT{CPU}\label{alg:parallel:nondom}
\FOR{$p = 1$ to $P$}\label{alg:parallel:bitflip-start}
    \STATE $\hat{\mathcal{Z}} \gets \textsc{NonDom}\!\left(\hat{\mathcal{Z}} \cup \textsc{BitFlip}(\hat{\mathcal{Z}})\right)$ \COMMENT{CPU}
    \STATE \textbf{break if} no new cut is found
\ENDFOR\label{alg:parallel:bitflip-end}
\STATE \textbf{return} $\hat{\mathcal{Z}}$
\end{algorithmic}
\end{algorithm}

\section{Implementation Details}
\paragraph{MO-QUCO.}
Unless stated otherwise, we use $J = 100{,}000$ preferences drawn i.i.d.\ uniformly from $\Delta^K$, batch size $B = 64$, $T = 150$ PGA steps, and step size $\alpha = 0.05$. The bit-flip refinement runs for at most five passes and stops early once a pass adds no new nondominated cut.
Each stochastic run uses a fixed random seed from $\{0, 1, 2\}$. All CPU experiments run on a single core of an AMD EPYC 9555 CPU with the linear-algebra backend pinned to one thread, so runtimes are directly comparable.
\paragraph{pMO-QUCO.}
pMO-QUCO uses the same default parameters and random seeds as MO-QUCO. The preference sweep runs on one NVIDIA H200 NVL GPU with 140\,GiB of memory. The $1$-bit-flip refinement runs on one CPU core.
\paragraph{Baselines.}
Following the benchmark, we scale the layer weights by $10^{3}$ and round them because DPA-a and DCM require integer coefficients. All reported HV values are evaluated with the exact continuous weights. Under the $10{,}000$\,s stopping rule, $\varepsilon$-CM solves an average of $650{,}970$ scalarized subproblems for $K=3$ and $484{,}867$ for $K=4$ over three seeds. WSM solves $800{,}000$ randomly weighted scalarizations exactly. For the quantum baseline, we do not rerun the simulation. We use the depth-6 matrix-product-state sampling results distributed with the benchmark. We use the same bond dimensions as the benchmark's main results. Specifically, $\chi = 50$ for $K=3$ and $\chi = 20$ for $K=4$. Sampling times are estimated under an idealized $10$\,kHz shot rate \citep{kotil2025quantum}.
\begin{figure*}[t]
\centering
\includegraphics[width=0.97\textwidth]{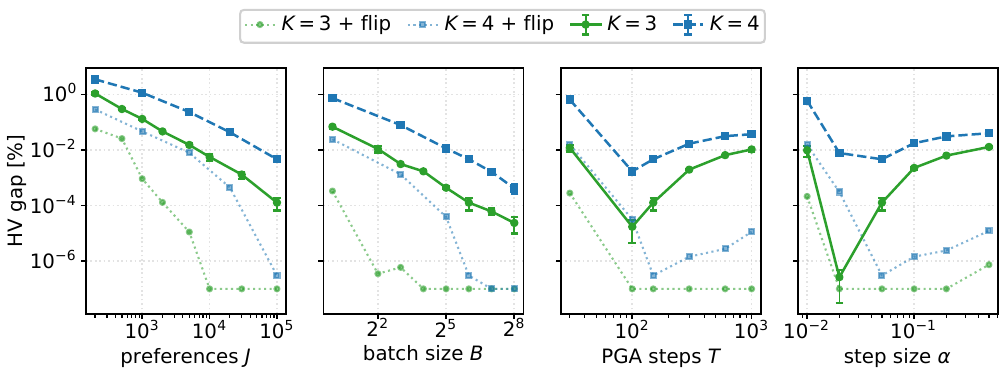}
\caption{Hyperparameter sensitivity of MO-QUCO under the default configuration of Table 1 in the main paper ($J=10^5$, $B=64$, $T=150$, $\alpha=0.05$). Each panel varies one hyperparameter. The vertical axis shows the percentage gap from $\mathrm{HV}_{\max}$. Curves and error bars show the mean $\pm$ one standard deviation over three seeds.}
\label{fig:ablation}
\end{figure*}

\section{Ablation and Sensitivity Study}

\paragraph{Ablation.}
Table~\ref{tab:components} removes one search stage at a time under the same budget of $J = 10^{5}$ preferences and batch size $B = 64$. In the table, the projected-gradient sampler is the main driver of quality, which raises the random baseline from $78.74\%$ to $99.9999\%$ of $\mathrm{HV}_{\max}$ for $K=3$ and from $72.82\%$ to $99.9954\%$ for $K=4$. The bit-flip refinement then closes the last gap to the reference optimum. However, the bit-flip cannot stand alone, as it reaches only about $97\%$ on random cuts. In sum, both stages are necessary.

\begin{table}[t]
\centering
{\small
\setlength{\tabcolsep}{4pt}
\begin{tabular}{lcc}
\toprule
& \multicolumn{2}{c}{\%$\mathrm{HV}_{\max}$} \\
\cmidrule(lr){2-3}
Configuration & $K=3$ & $K=4$ \\
\midrule
random & $78.74 \pm 0.31$ & $72.82 \pm 0.46$ \\
bit-flip & $97.00 \pm 0.60$ & $97.30 \pm 0.22$ \\
MO-QUCO & $99.9999 \pm 0.0001$ & $99.9954 \pm 0.0003$ \\
\;\; $+$ bit-flip & $\mathbf{100.0000 \pm 0.0000}$\rlap{$^{\ddagger}$} & $\mathbf{100.0000 \pm 0.0000}$\rlap{$^{\ddagger}$} \\
\bottomrule
\end{tabular}
}
\caption{Component ablation on the 42-node benchmark.
Results report the mean $\pm$ sample standard deviation over three seeds. A double dagger ($\ddagger$) marks an error within $2\times10^{-8}$.}
\label{tab:components}
\end{table}

\paragraph{Sensitivity.}
Figure~\ref{fig:ablation} shows that increasing the number of preferences $J$ or the batch size $B$ generally reduces the HV gap. Both parameters increase the number of sampled points. GPU acceleration makes it practical to increase them. The effects of the number of PGA steps $T$ and the step size $\alpha$ are also substantial and nonmonotonic. 
For the number of PGA steps, $T=100$ gives the smallest observed gap before the bit-flip refinement. For the step size, $\alpha=0.02$ performs best for $K=3$ and $\alpha=0.05$ performs best for $K=4$. The reason is that longer runs and larger steps drive each point closer to a PGA fixed point, while slightly under-converged iterates binarize into more diverse cuts and enrich the archive. 
Thus, the number of sampled points controls the overall scaling trend, while $T$ and $\alpha$ affect the balance between optimization and archive diversity.
\section{Choice of the Quadratic Formulation}
We compare the four quadratic formulations under the default configuration of Table 1 in the main paper. The perturbed and biased formulations use $\delta=0.001$. Each formulation is evaluated across three seeds without the $1$-bit-flip refinement. Table~\ref{tab:forms} reports the HV gaps and single-core CPU solve times of the resulting archives.

\begin{table}[t]
\centering
{\small
\setlength{\tabcolsep}{2.8pt}
\begin{tabular}{lrrrr}
\toprule
 & \multicolumn{2}{c}{$K=3$} & \multicolumn{2}{c}{$K=4$} \\
\cmidrule(lr){2-3} \cmidrule(lr){4-5}
Formulation & Gap & Time (s) & Gap & Time (s) \\
\midrule
Laplacian $g^{L}_{\lambda}$ & 0.27\,$\pm$\,0.03 & 116.4\,$\pm$\,0.2 & 156\,$\pm$\,39 & 117.1\,$\pm$\,1.2 \\
Perturbed $g^{P}_{\lambda}$ & 0.28\,$\pm$\,0.01 & 116.8\,$\pm$\,0.1 & 157\,$\pm$\,37 & 119.1\,$\pm$\,3.6 \\
Adjacency $g^{A}_{\lambda}$ & 0.06\,$\pm$\,0.03 & 118.0\,$\pm$\,0.3 & 59\,$\pm$\,4 & 117.7\,$\pm$\,0.6 \\
Biased $g^{B}_{\lambda}$ & 0.06\,$\pm$\,0.03 & 140.5\,$\pm$\,2.3 & 57\,$\pm$\,4 & 139.4\,$\pm$\,1.0 \\
\bottomrule
\end{tabular}
}
\caption{Quadratic formulation comparison on the 42-node benchmark under the default configuration without the $1$-bit-flip refinement. Results report the HV gap $\mathrm{HV}_{\max}-\mathrm{HV}$ and single-core CPU solve time as the mean $\pm$ sample standard deviation over three seeds. Lower is better for both metrics.}
\label{tab:forms}
\end{table}


\begin{figure*}[t]
\centering
\includegraphics[width=\textwidth]{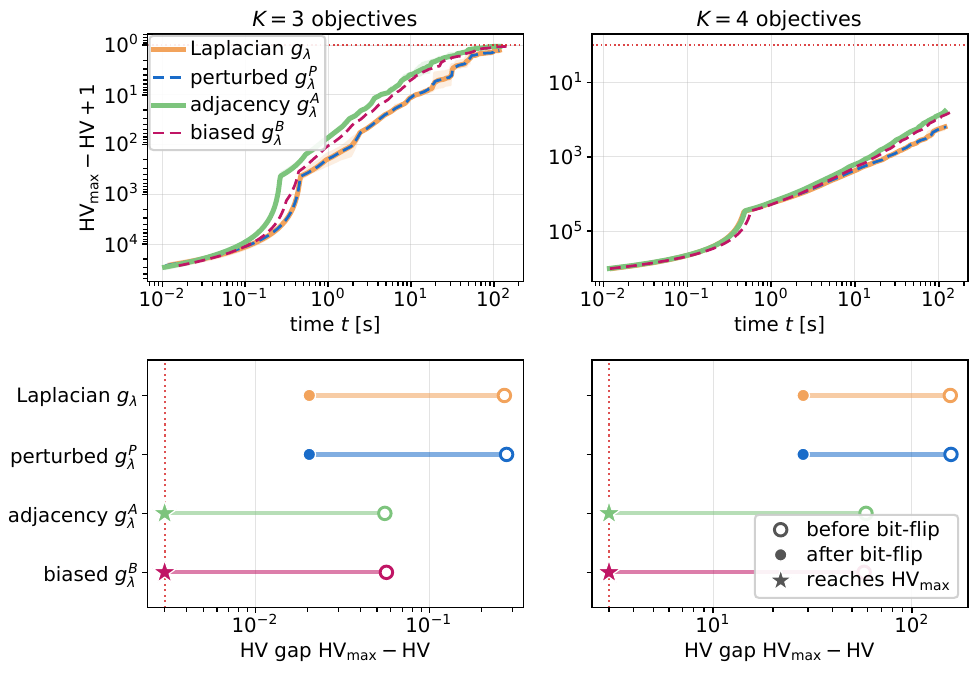}
\caption{Comparison of the four quadratic formulations on the 42-node benchmark under the default configuration. \textbf{Top:} $\mathrm{HV}_{\max}-\mathrm{HV}+1$ versus single-core CPU time without the $1$-bit-flip refinement. The vertical axis uses the reversed logarithmic scale of Figure 1 in the main paper. Curves show the mean over three seeds, and shaded bands show the min-to-max range for the solid curves. \textbf{Bottom:} Mean HV gap $\mathrm{HV}_{\max}-\mathrm{HV}$ over three seeds before and after the $1$-bit-flip refinement. Open and filled markers denote the means before and after refinement, respectively.}
\label{fig:forms}
\end{figure*}

Figure~\ref{fig:forms} shows that the adjacency formulation achieves the best HV-gap trajectory for both objective settings. The adjacency and biased formulations achieve similar final HV gaps before and after the $1$-bit-flip refinement. Their final gaps are smaller than those of the Laplacian and perturbed formulations. Table~\ref{tab:forms} shows that the biased formulation has a slightly smaller mean gap for $K=4$ before refinement. However, the adjacency formulation requires less solve time. Overall, the adjacency formulation provides the best balance between solution quality and computational cost. We therefore use it as the default objective of MO-QUCO in the main experiments.

\section{Data and Instance Availability}
\label{app:data}
The main paper uses the public QAMOO benchmark of \citet{kotil2025quantum}, 
while the supplementary experiments use an adapted version of the conflict-instance generator released with \citet{tao2026multi}.
The corresponding citations and generation procedures are provided in the paper and the supplementary code.

The novel dataset introduced in this work consists of synthetic MO-MaxCut instances with $K=3$ objectives and $n\in\{20,100,200\}$ vertices. For each value of $n$, we generate complete graphs and Erd\H{o}s--R\'enyi graphs with edge probability $0.5$. We use the five instance seeds $\{0,1,2,3,4\}$ for every size and topology. Each seed determines the graph topology and all edge weights through the deterministic construction presented below. The supplementary code includes the instance generator. Thus, every new instance is uniquely specified by its size, topology, and seed. 

\section{Conflicting-Objective Instances}
\label{app:conflict}
The QAMOO benchmark used in our main experiments \citep{kotil2025quantum} draws its $K$ weight layers i.i.d.\ from $\mathcal{N}(0,1)$, which makes the objectives weakly conflicting. To explore more conflicting objectives, we follow \citet{tao2026multi} and \citet{de2026quadratic} to generate $K=3$ instances with strongly negatively correlated objectives.
For each seed, we first generate either a complete graph or an Erd\H{o}s--R\'enyi graph in which each possible edge is included independently with probability $0.5$. The same edge set is shared by all three objectives. For each existing edge, we sample $c_1$ uniformly from $\{0,\ldots,45\}$. We then sample $c_2$ conditionally as
\[
c_2 \sim
\begin{cases}
\mathrm{U}\{45-c_1,\ldots,45\}, & c_1\leq 22,\\
\mathrm{U}\{0,\ldots,45-c_1\}, & c_1\geq 23.
\end{cases}
\]
Thus, a small $c_1$ tends to produce a large $c_2$ and vice versa. We assign the three edge weights as $w_1=c_1+5$, $w_2=c_2+5$, and $w_3=-0.5\,w_1-5\,w_2$. This construction gives $f_3(z)=-0.5f_1(z)-5f_2(z)$ for every cut $z$. It yields an empirical correlation of $\rho(f_1+f_2,f_3)\approx-0.915$ across the seeded instances. We use five seeds for each configuration. These instances have large Pareto fronts that expose the structural limitation of the weighted-sum method, which reaches only supported points.

At $n=20$, we enumerate all distinct cuts to obtain the exact Pareto front. Table~\ref{tab:conflict-exact} reports hypervolume as a percentage of the exact-front hypervolume and the Pareto recall. Recall is the fraction of exact Pareto-front outcomes recovered by a method. Results are averaged over five independently generated instances for each topology. Exact WSM and MO-QUCO use the same set of $J=256$ preference vectors. Exact WSM returns one global maximizer for each sampled scalarization. MO-QUCO uses $B=16$ initial points for each preference. The results show that exact WSM reaches only supported outcomes and recovers a negligible fraction of the exact front. MO-QUCO uses the adjacency formulation and applies five $1$-bit-flip passes. It recovers more than $99.9\%$ of the exact front and attains nearly $100\%$ of its hypervolume.

\begin{table}[t]
\centering
{\small
\setlength{\tabcolsep}{3pt}
\begin{tabular}{lcccc}
\toprule
 & Random & WSM & MO-QUCO & $+$ flip \\
\midrule
ER $p{=}0.5$, \%HV & $89.96$ & $81.53$ & $97.96$ & $\mathbf{100.0000}$ \\
\quad recall & $0.0167$ & $0.0001$ & $0.0080$ & $\mathbf{1.0000}$ \\
Complete, \%HV & $89.52$ & $79.24$ & $97.82$ & $\mathbf{100.0000}$ \\
\quad recall & $0.0172$ & $0.0001$ & $0.0076$ & $\mathbf{0.9999}$ \\
\bottomrule
\end{tabular}
}
\caption{Exact-front comparison at $n=20$ and $K=3$. For each topology, we independently generate five weighted graph instances and run each method once per instance. Results report mean \%HV and Pareto recall. The $+$ flip column uses five $1$-bit-flip passes.}
\label{tab:conflict-exact}
\end{table}


\begin{table*}[t]
\centering
{\small
\begin{tabular}{llccc}
\toprule
$n$ & Topology & NI-dSB & pMO-QUCO & $+$ flip \\
\midrule
$100$ & ER, $d{=}0.5$ & $96.28 \pm 0.37$ & $99.52 \pm 0.04$ & $\mathbf{99.97} \pm 0.01$ \\
$100$ & Complete & $97.73 \pm 0.13$ & $99.48 \pm 0.12$ & $\mathbf{99.97} \pm 0.01$ \\
$200$ & ER, $d{=}0.5$ & $97.04 \pm 0.13$ & $99.57 \pm 0.12$ & $\mathbf{99.99} \pm 0.00$ \\
$200$ & Complete & $98.34 \pm 0.08$ & $99.71 \pm 0.03$ & $\mathbf{99.98} \pm 0.00$ \\
\bottomrule
\end{tabular}
}
\caption{Conflict-instance comparison at $n \in \{100,200\}$ and $K=3$. Results report full-front hypervolume as a percentage of the per-instance union front of the reported methods. Values are the mean $\pm$ sample standard deviation over five seeded graph instances for each configuration.}
\label{tab:conflict-scale}
\end{table*}
Exact enumeration is infeasible for $n \in \{100, 200\}$. We therefore scale the same instance family to these sizes and compare pMO-QUCO with the noise-injected discrete simulated-bifurcation solver NI-dSB \citep{tao2026multi}. We generate five seeded instances for each configuration and evaluate both methods on the same instances. Each GPU search uses one NVIDIA H200 with a $10$\,s stopping rule after initialization. The stopping rule is checked before each complete solver round. Thus, a final round may finish after the $10$\,s threshold.
For NI-dSB, we use the released float32 implementation of \citet{tao2026multi}.
Each timed pMO-QUCO round processes $2{,}048$ preference vectors with $B=8$ initial points per preference. Each round performs $T=150$ PGA steps with the adjacency step matrix.
After each pMO-QUCO run, we apply five $1$-bit-flip passes outside the $10$\,s search budget. Each pass expands at most $8{,}000$ archive cuts. The five passes take about $116$\,s on average for $n=100$ and $424$\,s for $n=200$.
Under this GPU search setting, pMO-QUCO without the $1$-bit-flip refinement achieves a higher hypervolume than NI-dSB in every configuration. After five $1$-bit-flip passes, its mean hypervolume reaches at least $99.96\%$ of the union-front hypervolume.

\section{Limitations}

MO-QUCO does not guarantee recovery of the global Pareto front because it optimizes each scalarized problem approximately. Its performance depends on the numbers of sampled preferences and initial points. Moreover, the $1$-bit-flip refinement can also become expensive for large archives because it evaluates the neighborhood of every vector in the archive.


\end{document}